\let\NAT@parse\undefined
\newcommand{\pushright}[1]{\ifmeasuring@#1\else\omit\hfill$\displaystyle#1$\fi\ignorespaces}
\newcommand{\pushleft}[1]{\ifmeasuring@#1\else\omit$\displaystyle#1$\hfill\fi\ignorespaces}
\newcommand{\ts}{\textstyle}
\newcommand{\Rbb}{\mathbb{R}}
\newcommand{\Nbb}{\mathbb{N}}
\newcommand{\Zbb}{\mathbb{Z}}
\newtheorem{theorem}{Theorem}
\newtheorem{corollary}{Corollary}
\newtheorem{lemma}{Lemma}
\newtheorem{remark}{Remark}
\newcommand{\inv}[1]{\frac{1}{#1}}
\newcommand{\supp}{{\rm supp}\,}
\newcommand{\tinv}[1]{{\textstyle\frac{1}{#1}}}
\newcommand{\ud}{\mathrm{d}}
\renewcommand{\leq}{\leqslant}
\renewcommand{\geq}{\geqslant}
\DeclareMathOperator{\Id}{\mathds{1}}
\DeclareMathOperator*{\argmin}{argmin}
\newcommand{\bb}{\mathbb}
\newcommand{\bs}{\boldsymbol}
\newcommand{\cl}{\mathcal}
\newcommand{\ie}{\emph{i.e.}, }
\newcommand{\eg}{\emph{e.g.}, }
\newcommand{\bsxalt}{{\bs x}^*}
\title{\huge Error Decay of (almost) Consistent Signal Estimations\\
from Quantized Gaussian Random Projections}
\author{Laurent Jacques\thanks{LJ is with the ISPGroup, ICTEAM institute, ELEN Department, Universit\'e catholique de Louvain
(UCL), Belgium. Email: \url{laurent.jacques@uclouvain.be}. LJ is funded by Belgian National Science Foundation
(F.R.S.-FNRS).}}
\begin{document}

\maketitle

\begin{abstract}
This paper provides new error bounds on \emph{consistent}
reconstruction methods for signals observed from quantized random
projections. Those signal estimation techniques guarantee a perfect matching
between the available quantized data and a new observation of
the estimated signal under the same sensing model.
Focusing on dithered uniform scalar
quantization of resolution~$\delta>0$, we prove first that, given a
Gaussian random frame of~$\bb R^N$ with~$M$ vectors, the worst-case~$\ell_2$-error of consistent signal reconstruction
decays with high
probability as~$O(\tfrac{N}{M}\log\tfrac{M}{\sqrt N})$ uniformly for all
signals of the unit ball~$\mathbb B^N \subset \mathbb R^N$. Up to a
log factor, this matches a known lower bound in
$\Omega(N/M)$ and former empirical validations
  in~$O(N/M)$. Equivalently, if $M$ exceeds a minimal number of frame coefficients
growing like~$O(\tfrac{N}{\epsilon_0}\log \tfrac{\sqrt
  N}{\epsilon_0})$, any vectors in~$\mathbb B^N$ with~$M$ identical quantized
projections are at most~$\epsilon_0$ apart with high probability.
Second, in the context of Quantized Compressed Sensing with~$M$
Gaussian random measurements and under the same scalar quantization scheme, consistent
reconstructions of~$K$-sparse signals of~$\bb R^N$ have a worst-case
error that decreases with high probability as~$O(\tfrac{K}{M}\log\tfrac{MN}{\sqrt K^3})$
uniformly for all such signals.
Finally, we show that the proximity of vectors whose quantized random
  projections are only approximately consistent can still be bounded with
  high probability. A certain level of corruption is thus allowed in the quantization
  process, up to the appearance of a systematic bias in the reconstruction error of (almost)
  consistent signal estimates.
\end{abstract}

\section{Introduction}
\label{sec:introduction}
Since the advent of the digital signal processing era and of analog-to-digital converters, an intense field of
research has been concerned by the non-linear sensing model
\begin{equation}
  \label{eq:qcs-model}
  \bs q = \cl Q[\bs A \bs x] \in \cl J,  
\end{equation}
where~$\bs A \in \bb R^{M\times N}$ is a matrix representing a
linear transformation of a signal~$\bs x$ taken in some bounded subset
$\cl K$ of~$\bb R^N$, and~$\cl Q$
stands for a quantization of~$\bs A \bs x$ that maps~$\bs A\cl K :=
\{\bs A\bs u: \bs u \in \cl K \} \subset
\bb R^M$ to a
finite set of vectors~$\cl J \subset\Rbb^M$, \eg encoded over a given number of bits
\cite{Gray98,BoufChapter}. 

The bounded space~$\cl K$ containing generally 
an infinite number of signals, the model~\eqref{eq:qcs-model} is of course lossy and~$\bs
x$ cannot be recovered exactly from~$\bs q$. Quantifying this loss of
information as a function of both the signal reconstruction method and
of the key elements~$\bs A$,~$N$,~$M$,~$\cl K$ and~$\cl Q$ has
been therefore the topic of many studies at the frontier of
information theory, high-dimensional geometry, signal processing and statistics.

The general model~\eqref{eq:qcs-model} is
for instance the one adopted in Quantized Compressed Sensing (QCS)
\cite{candes2006near,BoufChapter,Jacques2010}, where the signal~$\bs x$ is assumed
sparse or compressible in an orthonormal basis~$\bs \Psi$ of~$\Rbb^N$, and the
sensing matrix is generated randomly, \eg from Gaussian random
ensembles~\cite{candes2006ssr}. When~$M\geq N$,
Eq.~\eqref{eq:qcs-model} is also a model for \emph{frame coefficient quantization} (FCQ) of signals in~$\bb R^N$, \ie when
the coefficients~$\bs A \bs x = (\bs a_1^T\bs x, \cdots, \bs a_M^T \bs
x)^T$ of~$\bs x$ in an overcomplete frame of
$\bb R^N$ are quantized in~$\bs q = \cl Q(\bs A \bs x) \in \cl J$,~$\bs A=(\bs a_1, \cdots, \bs a_M)^T$ representing the matrix whose
row set~$\{\bs a_j \in \bb R^N: 1\leq j\leq M\}$
collects the frame vectors~\cite{goyal_1998_lowerbound_qc,goyal2001quantized,bib:Boufounos06thesis}. 

In this work, we restrict the analysis of~\eqref{eq:qcs-model} to a scalar, regular and uniform
quantizer. The quantization~$\cl Q$ is then a scalar operation
applied componentwise on vectors; its 1-D quantization
cells~$\cl Q^{-1}[c] = \{\lambda: \cl Q[\lambda] = c\}\subset \Rbb$ are
convex and have all
the same size (or \emph{resolution})~$\delta > 0$. Other
quantization procedures have been studied for~\eqref{eq:qcs-model} and we refer the reader for instance to~\cite{BoufChapter}
for a review of scalar and~$\Sigma\Delta$-quantization
\cite{gunturk2013sobolev} in the QCS literature, to~\cite{B_TIT_12}
for a theoretical analysis of non-regular 
scalar quantizers, to~\cite{pai_nonadapt_MIT06,kamilov_2012}
for the use of non-regular \emph{binned} quantization, or to
\cite{vivekQuantFrame} for an example of vector quantization by frame
permutation. As realized in
\cite{gunturk2013sobolev,goyal_2008_cs_lossycomp,BoufChapter}, we also assume that the ``variability'' of the components
of~$\bs A \bs x$, also measured by their variance, does not change with~$M$. This is critical for defining
a quantizer~$\cl Q$ of constant resolution when~$M$ increases. 

Many studies have addressed the model~\eqref{eq:qcs-model}
by observing that the distortion induced by quantization 
compared to a linear model~$\bs A\bs x$ is the one of an additive measurement noise~$\bs n=\cl
Q[\bs A \bs x] - \bs A\bs x$ with~$n_i
\in [-\delta/2,\delta/2]$, \ie
\begin{equation}
  \label{eq:noisy-CS}
  \bs q = \bs A \bs x + \bs n.  
\end{equation}
When the resolution is small compared to the standard deviation of
  each component of~$\bs A \bs x$, \ie under the \emph{high resolution assumption}
\cite{Gray98,BoufChapter,Jacques2010}, or if a random \emph{dithering} is
added prior to quantization~\cite{Gray98}, each component of the noise
can be assumed as
uniformly distributed within~$[-\delta/2,\delta/2]$. This allows one
to bound the
power of this noise, \ie $\bb E(\|\bs n\|^2) = M \delta^2/12$ and~$\|\bs n\|^2
\leq \tinv{12}\,\delta^2 (M + \zeta \sqrt M)$ with high probability
for~$\zeta = O(1)$ (see, \eg~\cite{Jacques2010}).  

In the case of QCS, when a general noise~$\bs n$ of bounded
power~$\|\bs n\| \leq \varepsilon$ corrupts the compressive
observation of a sparse signal~$\bs x$ as in~\eqref{eq:noisy-CS}, a
worst-case reconstruction error that follows  
$$
\|\bs x - \bsxalt\| =
O(\varepsilon/\sqrt M),
$$
can be reached by various reconstruction methods (\eg Basis Pursuit
DeNoise~\cite{Chen98atomic,candes2006ssr} or Iterative Hard
Thresholding~\cite{BluDav::2008::Iterative-hard})
as soon as a suitably rescaled sensing matrix
$\inv{\sqrt M}\bs A$ respects the restricted isometry property
(RIP)~\cite{candes2006ssr}.

Thus, when the compressive observations of a~$K$-sparse signal undergo
uniform scalar quantization, it is then expected
that, with high probability,~$\|\bs x - \bsxalt\| = O(\delta)$ by
setting~$\varepsilon^2 = \tinv{12}\,\delta^2 (M + \zeta \sqrt
M)$. The constancy of this error with respect to~$M$ is also known as the
classical error limit of the \emph{pulse code modulation scheme} (PCM)  in CS~\cite{gunturk2013sobolev}. 

However, most of the reconstruction techniques enforce a
$\ell_2$-norm fidelity with~$\bs q$, \eg by imposing~$\|\bs A \bsxalt - \bs q\| \leq
\varepsilon$, and the reconstructed signal is not guaranteed to be
consistent with the observations, \ie $\cl Q[\bs A \bsxalt] \neq
\bs q$. The knowledge of the sensing model is thus not fully exploited
for reconstructing~$\bs x$ from~$\bs q$.

In the context of signal representations using frames, it is also
known that, for unit-norm frame vectors, if the frame coefficients of a signal are corrupted by an
additive noise of variance~$\sigma^2$, the linear signal estimate
synthesized from the dual frame on these coefficients has a root mean square error (RMSE) lower bounded by~\citep{goyal2001quantized,powell_consistent}
$$
(\bb E\|\bs x - \bsxalt\|^2)^{\inv{2}} \geq N \sigma /\sqrt M,
$$  
where the expectation is taken with respect to the noise. This
shows that for FCQ the
reconstruction error decay of such a linear reconstruction is limited
to~$O(N\delta/\sqrt M)$ since~$\sigma = O(\delta)$. Nevertheless, the produced
solution is also inconsistent with the observations, \ie $\bs A
\bsxalt \notin \cl Q^{-1}[\bs q]$, as the signal synthesis reached by the
dual frame~$\bs A^\dagger$ amounts to solving the
least-squares problem~$\bsxalt = \argmin_{\bs u\in\bb R^N} \|\bs q - \bs A\bs u\|^2 = (\bs A^T \bs A)^{-1} \bs A^T
 \bs q = \bs A^\dagger \bs q$, which promotes an~$\ell_2$-norm
 fidelity with respect to~$\bs q$. 

This work studies a better approach for improving the reconstruction
error decay in both QCS and FCQ. The proposed analysis explicitly
enforces quantization consistency while reconstructing the signal, \ie
  finding an estimate~$\bsxalt \in \cl K$ such that~$\cl Q[\bs A\bsxalt]
  = \cl Q[\bs A\bs x]$. This procedure was initially introduced
in~\cite{Thao_Consistency_1994} for oversampled analog-to-digital
conversion of bandlimited signals, or in~\cite{goyal_1998_lowerbound_qc}
in the more general context of quantized
overcomplete signal expansion. In more detail,~\cite{goyal_1998_lowerbound_qc}~showed that, given a
random model on the generation of the sensed signal, the RMSE of any
reconstruction method is lower bounded by
$\Omega(N/M)$. Interestingly, the same
lower bound can also be obtained on the worst-case
reconstruction error without requiring any random model on the
source~\cite{BoufChapter}. While conjectured for general
frames with redundancy factor~$M/N$, the combination of a tight frame formed by an oversampled
Discrete Fourier Transform (DFT) with a
consistent signal reconstruction reaches this lower bound, \ie in this
case the RMSE is upper bounded by~$O(N/M)$~\cite{goyal_1998_lowerbound_qc}.
Numerically, the reconstruction errors of recovery methods based on alternate projections
onto convex sets\footnote{This method finds a
  vector of~$\bb R^M$ by alternate projections between the image of~$\bs A$ and the
  consistency cell~$\cl Q^{-1}[\bs q]$, these two sets being
  convex. The signal is reconstructed from the POCS solution using the dual frame.} (POCS)
\cite{Thao_Consistency_1994,goyal_1998_lowerbound_qc} or on message
passing algorithms~\cite{kamilov_2012}, have also been
observed to approach~$O(N/M)$ for both deterministic and random~$\bs A$.
Moreover, the Rangan-Goyal recursive algorithm
\cite{rangan_rec_algo_2001}, which enforces local consistency of the
current estimate at every iteration, provides reconstruction error
decaying as~$\bb E(\|\bs x-\bsxalt\|^2) = O(1/M^2)$ for random
frames, where the expectation is made on the (uniform) quantization
noise~\cite{powell_RGalgo}. 
More recently, Powell and Whitehouse~\cite{powell_consistent} have
analyzed geometrically the worst-case error of any consistent
reconstruction method. As detailed in Sec.~\ref{sec:prior-works}, they
showed for instance that, for frames constructed by taking~$M$ vectors picked uniformly at random
over the unit sphere~$\bb S^{N-1} \subset \bb R^N$, the expectation of
this error with respect to the random frame construction decays as~$O(\delta N^{3/2}/M)$. 

We can also mention that consistent reconstruction methods have been applied to QCS in the high-resolution regime (\ie for~$\delta \ll 1$ when~$\cl K \subset \bb B^N$)
\cite{Dai2009,Jacques2013,BoufChapter}; for uniform (or bounded) noise
for FCQ~\cite{powell_consistent}; in the extreme 1-bit QCS setting,
where quantization reduces to the application of a sign operator
\cite{BB_CISS08,jacques2013robust,plan2013one,plan2011dimension}; and
even for non-regular quantization scheme in QCS~\cite{B_TIT_12,BR_DCC13}. 

Compared to these former works, we highlight three important features of
this paper. These are only sketched in this Introduction and we refer
the reader to Sec.~\ref{sec:main-results} for their precise statements. 

First, we analyse the FCQ and the QCS contexts when~$\bs A$
is generated as a Gaussian random matrix and when the scalar
quantization incorporates a uniform \emph{dithering}
\cite{Gray98,B_TIT_12}. As will be clear later, this dithering, which
is often used to improve the statistical properties of the quantizer
and is assumed to be known in signal reconstruction, allows
us to leverage a geometric connection between quantized Gaussian random
projections of vectors and Buffon's needle problem~\cite{jacques2013quantized}.  

Second, we provide upper bounds for the worst-case reconstruction
error of consistent signal estimations, \ie valid for the reconstruction of any vector in the signal set~$\cl
K$. Those bounds hold with high (and controlled) probability over the generation of
both~$\bs A$ and the dithering as soon as~$M$ is large compared to the
\emph{complexity} of the signal space~$\cl K$. 

For instance, in the case of
FCQ with Gaussian random frames and~$\cl K = \bb B^N$, we show that
if~$M$ is bigger than a minimal value growing like
$O(\frac{\delta}{\epsilon_0} N \log \sqrt N/\epsilon_0)$, then, with
high probability, the distance between any pair of vectors in~$\cl K$ having consistent
observations through the mapping~\eqref{eq:qcs-model} cannot be larger
than~$\epsilon_0$ (see Theorem~\ref{thm:consist-impose-closeness}). Inverting the relationship between~$\epsilon_0$ and
the minimal~$M$, this establishes also that with high probability, the
distance between consistent vectors decays like~$O(\frac{N}{M}\log
\frac{M}{\sqrt N})$~(see Corollary~\ref{cor:bound-distance}). 

Similar bounds are also obtained in the case of
QCS of bounded sparse signals when the sensing matrix is a Gaussian
random ensemble. Then, with high probability, if~$M$ exceeds a minimal
number of measurements growing like~$O(\frac{\delta}{\epsilon_0} K
\log \frac{N}{\epsilon_0\sqrt K})$, the distance between two
consistent~$K$-sparse vectors cannot be larger than
$\epsilon_0$  (see Theorem~\ref{thm:consist-impose-closeness-saprse-case}). Equivalently, with high probability, their distance must
then decay like~$O(\frac{K}{M}\log
\frac{MN}{\sqrt K^3})$ (see Corollary~\ref{cor:bound-distance}).

Finally, we evaluate the impact of relaxing the consistency
requirement, \ie allowing for a certain level  of
inconsistent observations, on the reconstruction error of ideal
estimators based on this relaxed condition. This is done by studying
the proximity of two vectors of~$\cl K$ when their mappings in
\eqref{eq:qcs-model} differ by no more than~$r$ components. We show that if this level
$r$ is constant with respect to the evolution of~$M$, \ie what we call
\emph{almost perfect consistency}, the previous
consistency bounds are basically unchanged up to a modification by a multiplicative factor proportional to $(N+r)/N$ for FCQ and to $(K+r)/K$ for QCS (see~Theorem~\ref{thm:bound-distance-almost-consist}). However, in the case where
$r$ can reach a constant fraction of~$M$, \ie $r \leq \rho M$ for~$\rho>0$,
then, provided~$\rho<1/10$, a bias impacts the proximity of two vectors
in~$\cl K$ differing by
no more than~$r$ components in their quantized mapping. With high
probability, the distance
between any such vectors is then smaller than the sum of two terms, one that still
decays with~$M$ and another that is lower bounded
by~$\Omega(\rho\delta)$ (see~Theorem~\ref{thm:propor-consist}).
   
\medskip

The rest of the paper is structured as follows. We start by
providing the precise statement of our main results in
Sec.~\ref{sec:main-results}. In Sec.~\ref{sec:prior-works}, continuing
the literature analysis given above, we
connect our results to a few prior works that are the most connected
to our study. Sec.~\ref{sec:proofs} contains the proofs of
our main results,
while we postpone to Appendix~\ref{sec:proof-lemma} the proof of a key
but more technical lemma, \ie Lemma~\ref{lemma:main-proba-bound}, that sustains
both Theorem~\ref{thm:consist-impose-closeness} and Theorem~\ref{thm:consist-impose-closeness-saprse-case}.

\paragraph*{Conventions:} In the following, we will denote domain
dimensions by capital roman letters, \eg~$M, N$. Vectors and matrices are associated to bold symbols, \eg~$\bs \Phi \in \bb R^{M\times N}$
or~$\bs u \in \bb R^M$, while lowercase light letters
are associated to scalar values. The identity matrix in~$\bb R^D$ reads
$\Id_{D}$. The~$i^{\rm th}$ component of a vector (or of a
vector function)~$\bs u$ reads either~$u_i$ or~$(\bs u)_i$, while the
vector~$\bs u_i$ may refer to the~$i^{\rm th}$ element of a set of
vectors. The set of indices in~$\Rbb^D$ is~$[D]=\{1,\,\cdots,D\}$ and
for any~$\cl S \subset [D]$ of cardinality~$S = \# \cl S$,
$\bs u_{\cl S} \in \bb R^{\# \cl S}$ denotes the restriction of
$\bs u$ to~$\cl S$. For materializing this last operation, we also
introduce the linear restriction operator~$\cl R_{\cl S}$ such that
$\cl R_{\cl S} \bs u = \bs u_{\cl S}$, \ie
$\cl R_{\cl S} = ((\Id_{M})_{\cl S})^T$, where
$\bs B_{\cl S}$ denotes the matrix obtained by restricting the columns
of~$\bs B \in \bb R^{D \times D}$ to those indexed in~$\cl S$.  For
any~$p\geq 1$, the~$\ell_p$-norm of~$\bs u$ is
$\|\bs u\|_p = (\sum_i |u_i|^p)^{1/p}$ with~$\|\!\cdot\!\|=\|\!\cdot\!\|_2$.
The~$(N-1)$-sphere in~$\Rbb^N$ is
$\bb S^{N-1}=\{\bs x\in\Rbb^N: \|\bs x\|=1\}$ while the unit ball is
denoted~$\bb B^{N}=\{\bs x\in\Rbb^N: \|\bs x\|\leq 1\}$. More
generally, we note
$\bb B_s^{N}(\bs q)=\{\bs x\in\Rbb^N: \|\bs x-\bs q\|\leq s\}$. 
We use the simplified notation~$\cl D^{M\times
  N}(\eta)$ and~$\cl D^{M}(\eta)$ to denote an~$M\times N$
random matrix or an~$M$-length random vector, respectively, whose entries are
identically and independently distributed as the probability distribution
$\cl D(\eta)$ of parameters~$\eta = (\eta_1, \cdots, \eta_P)$, \eg
the standard normal distribution~$\cl N(0,1)$ or the uniform
distribution~$\cl U([0, \delta])$. For asymptotic relations, we use the common Landau family of notations,
\ie the symbols~$O$,~$\Omega$ and~$\Theta$~\cite{knuth1976big}. The
positive thresholding function is defined by
$(\lambda)_+ := \tinv{2}(\lambda + |\lambda|)$ for
any~$\lambda\in\Rbb$, and~$\lfloor \lambda \rfloor$ denotes the largest
integer smaller than~$\lambda$.

\section{Main Results}
\label{sec:main-results}

Let us now develop the precise statements of our main results. In this work, we thus focus on the interplay of a uniform midrise quantizer 
\begin{equation}
  \label{eq:quantizer-def}
  \cl Q_\delta(\lambda) := \delta (\lfloor \tfrac{\lambda}{\delta}
  \rfloor + \tinv{2})
  \in \delta (\Zbb + \tinv{2}) =: \bb Z_\delta
\end{equation}
of resolution~$\delta>0$, applied componentwise on vectors, with
both a Gaussian random matrix~$\bs A = \bs \Phi \sim \cl N(0,1)^{M
  \times N}$ and a \emph{dithering}~$\bs \xi \sim \cl U^M([0, \delta])$.
Such a dithering, which must be known at the signal reconstruction, is often used for improving the
statistical properties of the quantizer by randomizing the unquantized
input location inside the quantization cell~\cite{Gray98}. As
will become clear later (see Lemma~\ref{lemma:main-proba-bound} and
its proof in Appendix~\ref{sec:proof-lemma}), this
uniform dithering allows us also to bridge our analysis with a
geometrical probability context inspired by Buffon's needle problem~\cite{buffon_origin,jacques2013quantized}.

Consequently, given a signal~$\bs x$ in a bounded set~$\cl K\subset\bb R^N$, the quantized sensing scenario studied in this paper reads
\begin{equation}
  \label{eq:qcs-unif-dith-model}
  \bs q = \cl Q_\delta[\bs \Phi \bs x + \bs \xi] \in \bb Z^M_\delta.  
\end{equation}
This is either a sensing model for QCS with a Gaussian random sensing~$\bs \Phi$,
or a quantization scheme for
FCQ when the overcomplete frame is made of~$M$ vectors in~$\bb
R^N$ (with~$M \geq N$) that are randomly and independently drawn from~$\cl N(0,
\Id_{N})$. This guarantees that they are also linearly
independent with probability 1, \ie we obtain a \emph{Gaussian random frame}
(GRF) of
$\bb R^N$.

Our main objective in order to quantify the information loss in
\eqref{eq:qcs-unif-dith-model} while trying to estimate~$\bs x$ is to characterize the worst-case error 
$$
\cl E_\delta(\bs \Phi, \bs \xi, \cl K) := \max_{\bs x \in \cl K}\|\bs x -
\bsxalt\|
$$ of any \emph{consistent} reconstruction method whose output~$\bsxalt$ is determined by the following formal program: 
\begin{multline}
  \label{eq:consistent-rec}
  \text{find any}\ \bsxalt\in \bb R^N
 \text{such that}\ \cl Q_\delta(\bs \Phi\bsxalt + \bs \xi) = \cl Q_\delta(\bs \Phi\bs
  x + \bs \xi)\ \text{and}\ \bsxalt \in \cl K.
\end{multline}
In the case of FCQ of signals in a GRF (\ie $M \geq N$), we set~$\cl K = \bb B^N$, while in the context of QCS we take~$\cl K = \Sigma_K(\bs \Psi) \cap \bb B^N$ 
where~$\Sigma_K(\bs \Psi) := \{\bs v =\bs\Psi\bs\alpha\in \bb R^N:
\|\bs \alpha\|_0 \leq K\}$, with~$\|\bs \alpha\|_0 := \#\{j \in [N]:
\alpha_j \neq 0\}$, is the space of~$K$-sparse signals in the
orthonormal basis~$\bs \Psi \in \bb R^{N \times N}$. For the sake of
simplicity, we work with the canonical basis~$\bs \Psi= \Id_{N}$ with~$\Sigma_K := \Sigma_K(\Id_{N})$. However, all our results can be applied to
$\bs \Psi \neq \Id_{N}$ from the rotational invariance of the Gaussian random matrix~$\bs \Phi \sim \cl N^{M\times
  N}(0,1)$ in~$\bb R^N$~\cite{candes2006near}.

We acknowledge the fact that for~$\cl K = \Sigma_K \cap \bb B^N$
the program~\eqref{eq:consistent-rec} is possibly NP
hard, \eg if~$\bsxalt$ is found by minimizing the~$\ell_0$ ``norm''
under the consistency constraint~\cite{natarajan1995sparse}. However, similarly to the procedure developed in
\cite{jacques2013robust}, we are anyway interested in studying its
reconstruction error, remembering that similar
ideal reconstructions in CS and in QCS have often driven the
determination of feasible programs~\cite{candes2006near,tropp2006just,Dai2009,jacques2013robust,plan2011dimension,Jacques2010,Jacques2013}.

Notice that the error~$\cl E_\delta$ is also associated to the biggest size, with
respect to all~$\bs x \in \cl K$, of all consistency
cells~$\cl C_{\bs x} := \{\bsxalt \in \cl K: \cl Q_{\delta}(\bs \Phi \bsxalt + \bs \xi)
= \cl Q_{\delta}(\bs \Phi \bs x + \bs \xi) \}$, \ie 
$$
\cl E_\delta(\bs \Phi, \bs \xi, \cl K) = \max_{\bs x \in \cl K} \max_{\bsxalt \in \cl C_{\bs x}} \|\bs x -
\bsxalt\|,
$$ 
which shows that the characterization of~$\cl E_\delta$ is actually a high
dimensional geometric problem, whose general formulation can be connected to the problem of
finding a finite \emph{covering} of~$\cl K$ of minimal size~\cite{BoufChapter}.
\medskip

Our first contribution is an upper bound on this worst-case error for
consistent signal reconstruction in the context of Gaussian Random Frame
Coefficient Quantization (GRFCQ).
\begin{theorem}[Proximity of consistent vectors -- GRFCQ case]
\label{thm:consist-impose-closeness}
Let us fix~$\epsilon_0 >0$,~$0< \eta < 1$,~$\delta>0$ and~$M\geq N$ such that 
$$
M  \geq \tfrac {4\delta\ +\
  2 \epsilon_0}{\epsilon_0}\,\big(N \log(\tfrac{29 \sqrt{N}}{\epsilon_0}) + \log \tfrac{1}{2\eta}\big). 
$$
Let us randomly draw a GRF~$\bs \Phi \sim \cl N^{M\times N}(0,1)$ and a dithering
$\bs \xi \sim \cl
U^M([0, \delta])$. Then, with probability higher than~$1-\eta$, for all
$\bs x \in \cl K = \bb B^N$ sensed by~\eqref{eq:qcs-unif-dith-model}, any
solution~$\bsxalt$ to~\eqref{eq:consistent-rec} is such that~$\|\bs x - \bsxalt\| \leq \epsilon_0$, or, equivalently,~$\cl E_\delta(\bs \Phi, \bs
\xi, \bb B^N) \leq \epsilon_0$.
\end{theorem}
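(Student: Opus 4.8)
The plan is to translate the consistency condition into a statement about quantization cell boundaries and then use a covering argument together with a Buffon-needle-type probability estimate. If $\bs x$ and $\bsxalt$ are consistent, then for every row $j$ we have $\cl Q_\delta(\bs\varphi_j^T\bs x + \xi_j) = \cl Q_\delta(\bs\varphi_j^T\bsxalt + \xi_j)$, which means the hyperplane-like ``quantization grid'' $\{\lambda : \lambda + \xi_j \in \delta\Zbb\}$ (pulled back through $\bs\varphi_j$) does not separate $\bs x$ from $\bsxalt$. The key quantity to control is therefore, for a fixed pair $(\bs x, \bsxalt)$, the probability that a single random dithered hyperplane $\bs\varphi_j$ fails to separate them; by the dithering and the geometry of Gaussian projections (this is precisely the Buffon's needle connection alluded to via Lemma~\ref{lemma:main-proba-bound}), this probability should be roughly $1 - \min(1,\tfrac{\|\bs x - \bsxalt\|}{\delta} \cdot c)$ for an absolute constant $c$ coming from the expected absolute value of a standard Gaussian. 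Hence if $\|\bs x - \bsxalt\| \geq \epsilon_0$, the probability that \emph{all} $M$ rows fail to separate is at most $(1 - c\,\epsilon_0/\delta)^M \leq \exp(-cM\epsilon_0/\delta)$ (with the obvious truncation when $\epsilon_0 \geq \delta/c$, which only helps).

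The next step is to make this uniform over all of $\cl K = \bb B^N$. I would fix a radius $\epsilon_0 > 0$ and show that if \emph{any} consistent pair is more than $\epsilon_0$ apart, then (by a standard net argument) there is a pair of points $\bs x', \bsxalt'$ taken from a $\gamma$-net $\cl G$ of $\bb B^N$, with $\gamma$ a suitable fraction of $\epsilon_0$ (say $\gamma = \epsilon_0/4$ or similar), such that $\|\bs x' - \bsxalt'\|$ is bounded below by a constant times $\epsilon_0$ and such that $\bs x', \bsxalt'$ are ``almost consistent'' — consistent up to the perturbation induced by moving to the net. This is where a bit of care is needed: moving $\bs x$ to a nearby net point can flip the quantized value on some coordinates, so I would instead either (i) enlarge $\epsilon_0$ slightly and argue directly that the net point pair has projections landing in adjacent or identical cells on all rows, or (ii) set up the net argument so that it suffices to bound the probability that a net pair lies in cells differing by at most one level on every row — the latter only changes the per-row non-separation probability by an additive $O(\delta/\delta) = O(1)$-type term that can be absorbed. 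Either way, the union bound over $\#\cl G \leq \#\cl G^2$ pairs (in fact over the $\leq (1 + 2/\gamma)^{2N}$ pairs, or more cleverly over a net of the \emph{difference set}, giving $(1+2/\gamma)^N \cdot (\text{a few})$) must beat $\exp(-cM\epsilon_0/\delta)$.

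Concretely, the failure probability is at most $(\#\text{net pairs}) \cdot \exp(-c M \epsilon_0/\delta)$, and requiring this to be $\leq \eta$ gives a condition of the form $M \geq \tfrac{\delta}{c\epsilon_0}\big(2N\log(1 + 2/\gamma) + \log\tfrac{1}{\eta}\big)$; substituting $\gamma \asymp \epsilon_0/\sqrt N$ (the net is of $\bb B^N$, so one expects the $\sqrt N$ to enter through the geometry — indeed the right scale comes from covering $\bb B^N$ at resolution proportional to $\epsilon_0$, but the $\sqrt N$ in the log argument $\tfrac{29\sqrt N}{\epsilon_0}$ suggests the net is built after a normalization that introduces $\sqrt N$, likely because $\|\bs\Phi\bs x\|$ concentrates at scale $\sqrt M$ and $\|\bs\varphi_j\|$ at scale $\sqrt N$) yields exactly the stated bound $M \geq \tfrac{4\delta + 2\epsilon_0}{\epsilon_0}\big(N\log\tfrac{29\sqrt N}{\epsilon_0} + \log\tfrac{1}{2\eta}\big)$, with the factor $4\delta + 2\epsilon_0$ absorbing the truncation/constant bookkeeping and the ``$+2\epsilon_0$'' accounting for the regime $\epsilon_0 \gtrsim \delta$.

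The main obstacle I anticipate is the discretization step: because quantization is a discontinuous operation, one cannot naively replace $\bs x$ by a net point and preserve consistency, so the real work is in Lemma~\ref{lemma:main-proba-bound} — establishing, uniformly and with the correct Buffon-type constant, a lower bound on the probability that a single dithered Gaussian hyperplane separates two points at distance $\geq \epsilon_0$ (or separates their ``cells by more than one''), robustly enough that the net argument goes through without losing more than constant factors. Controlling the Gaussian tail events (that $\|\bs\varphi_j\|$ or $\bs\varphi_j^T(\bs x - \bsxalt)$ is not too atypical) and combining them cleanly with the dithering randomness is the technical heart; everything else is a union bound and arithmetic to match the constants.
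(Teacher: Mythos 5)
Your overall skeleton is the paper's: a net on $\bb B^N$, a per-measurement Buffon-type separation probability of order $1-\Theta\big(\min(1,\epsilon_0/\delta)\big)$ raised to the power $M$, and a union bound over net pairs; you also correctly single out the discretization as the crux. But the two fixes you propose for that crux do not work, and this is a genuine gap. Relaxing consistency at the net level to ``identical or adjacent cells on every row'' gives away a full quantization step per measurement: for a net pair $\bar{\bs x},\bar{\bs x}^*$ at distance $\approx\epsilon_0$, the per-row event that their dithered projections land \emph{more} than one cell apart forces $|\bs\varphi_j^T(\bar{\bs x}-\bar{\bs x}^*)|\geq\delta$, whose probability is of order $e^{-\delta^2/(2\epsilon_0^2)}$, not of order $\epsilon_0/\delta$. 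Hence in the regime $\epsilon_0\ll\delta$ --- exactly the regime needed for the $O(\tfrac{N}{M}\log\tfrac{M}{\sqrt N})$ decay of Corollary~\ref{cor:bound-distance} --- your union bound would require $M$ exponentially large in $(\delta/\epsilon_0)^2$ rather than the stated $M\gtrsim \tfrac{\delta}{\epsilon_0}N\log\tfrac{\sqrt N}{\epsilon_0}$. The claim that the relaxation ``only changes the per-row non-separation probability by an additive $O(1)$-type term that can be absorbed'' is precisely where the argument breaks.

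The paper avoids the issue by making Lemma~\ref{lemma:main-proba-bound} a statement about \emph{balls}, not about a fixed pair: it bounds by $\big(1-\tfrac{3\alpha}{8+4\alpha}\big)^M$ the probability that there \emph{exist} consistent points $\bs u\in\bb B_{s'}(\tilde{\bs p})$, $\bs v\in\bb B_{s'}(\tilde{\bs q})$, for some admissible radius $s'\geq\tfrac{1}{8\sqrt N}\|\tilde{\bs p}-\tilde{\bs q}\|$. Then consistency of the original pair $(\bs x,\bsxalt)$ immediately implies the lemma's event for the nearest net pair, since $\bs x\in\bb B_s(\bar{\bs x})$ and $\bsxalt\in\bb B_s(\bar{\bs x}^*)$: no quantization-flip bookkeeping and no control of $\max_j\|\bs\varphi_j\|$ is ever needed. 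Geometrically this is a Buffon ``dumbbell'': one needs a quantization hyperplane crossing the segment joining the ball centers \emph{outside} both balls. The only price is the lemma's radius restriction, which forces the covering resolution $2s=\epsilon_0/(4\sqrt N+1)$; that restriction --- not a rescaling tied to $\|\bs\varphi_j\|\approx\sqrt N$, as you guessed --- is the actual source of the $\log(29\sqrt N/\epsilon_0)$ in the condition on $M$. With the ball version of the lemma in hand, the rest of your plan goes through essentially as you sketched: a union bound over the $\binom{L_s}{2}<\tinv{2}(3/s)^{2N}\cdot$(pairs) with $L_s\leq(3/s)^N$, the monotonicity in $\alpha$ together with $\|\bar{\bs p}-\bar{\bs q}\|\geq\epsilon_0-2s\geq\tfrac45\epsilon_0$, and the elementary bound $1-\tfrac{3\alpha}{8+4\alpha}\leq e^{-\epsilon_0/(4\delta+2\epsilon_0)}$ yield the stated condition on $M$.
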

\begin{proof}
See Sec.~\ref{sec:main-result}.  
\end{proof}

As shown in Sec.~\ref{sec:extension-k-sparse}, it is then
straightforward to adapt Theorem~\ref{thm:consist-impose-closeness}
to quantized observations of sparse signals, \ie to QCS. 
\begin{theorem}[Proximity of consistent vectors -- QCS case]
\label{thm:consist-impose-closeness-saprse-case}
Let us fix~$\epsilon_0 >0$,~$0< \eta < 1$,~$\delta>0$ and~$M$ such that 
$$
M \geq \tfrac {4\delta\ +\ 2 \epsilon_0}{\epsilon_0} \big(2K
\log(\tfrac{56 N}{\sqrt{K}\epsilon_0}) + \log \tfrac{1}{2\eta}\big).
$$
Let us randomly draw a Gaussian sensing matrix~$\bs \Phi \sim \cl
N^{M\times N}(0,1)$ and a dithering~$\bs \xi \sim \cl
U^M([0, \delta])$. Then, with probability higher than~$1-\eta$, for all
$\bs x \in \cl K = \Sigma_K \cap \cl B^N$ sensed by~\eqref{eq:qcs-unif-dith-model}, any
solution~$\bsxalt$ to~\eqref{eq:consistent-rec} is such that~$\|\bs x - \bsxalt\| \leq \epsilon_0$, or, equivalently,~$\cl E_\delta(\bs \Phi, \bs
\xi, \Sigma_K \cap \bb B^N) \leq \epsilon_0$.
\end{theorem}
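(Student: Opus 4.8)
The plan is to derive Theorem~\ref{thm:consist-impose-closeness-saprse-case} from Theorem~\ref{thm:consist-impose-closeness} by covering $\Sigma_K\cap\bb B^N$ with finitely many low-dimensional coordinate balls and taking a union bound of the frame result over all sparsity patterns. The elementary starting observation is that whenever $\bs x,\bsxalt\in\Sigma_K\cap\bb B^N$, the difference $\bs x-\bsxalt$ is $2K$-sparse, so \emph{both} vectors lie in the coordinate ball $\bb B^{\cl S}:=\{\bs u\in\bb B^N:\supp\bs u\subseteq\cl S\}$ for some index set $\cl S\subseteq[N]$ with $\#\cl S=2K$ (if $\#(\supp\bs x\cup\supp\bsxalt)<2K$, pad it with arbitrary extra indices). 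Using the alternative characterization of $\cl E_\delta$ through consistency cells, this yields
$$\cl E_\delta(\bs\Phi,\bs\xi,\Sigma_K\cap\bb B^N)\ \leq\ \max_{\cl S\subseteq[N],\ \#\cl S=2K}\ \cl E_\delta(\bs\Phi,\bs\xi,\bb B^{\cl S}),$$
a maximum over at most $\binom N{2K}$ index sets.

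Next I would fix one such $\cl S$ and observe that for $\bs x\in\bb B^{\cl S}$ one has $\bs\Phi\bs x=\bs\Phi_{\cl S}\,\bs x_{\cl S}$, where $\bs\Phi_{\cl S}\in\bb R^{M\times 2K}$ (the submatrix of columns of $\bs\Phi$ indexed by $\cl S$) is distributed as $\cl N^{M\times 2K}(0,1)$ and $\bs x_{\cl S}\in\bb B^{2K}$, while $\|\bs x-\bsxalt\|=\|\bs x_{\cl S}-\bsxalt_{\cl S}\|$ for vectors supported in $\cl S$. Hence the model~\eqref{eq:qcs-unif-dith-model} and the consistency program~\eqref{eq:consistent-rec} restricted to $\bb B^{\cl S}$ are (after the obvious identification $\bb R^{\cl S}\cong\bb R^{2K}$) exactly the GRFCQ setting of Theorem~\ref{thm:consist-impose-closeness} in ambient dimension $2K$ with $\cl K=\bb B^{2K}$ and the same dithering $\bs\xi\sim\cl U^M([0,\delta])$, so that $\cl E_\delta(\bs\Phi,\bs\xi,\bb B^{\cl S})=\cl E_\delta(\bs\Phi_{\cl S},\bs\xi,\bb B^{2K})$. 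Therefore Theorem~\ref{thm:consist-impose-closeness} (or, more directly, the common Lemma~\ref{lemma:main-proba-bound} underlying it) gives: provided $M\geq\frac{4\delta+2\epsilon_0}{\epsilon_0}\big(2K\log\frac{29\sqrt{2K}}{\epsilon_0}+\log\frac1{2\eta'}\big)$, one has $\cl E_\delta(\bs\Phi_{\cl S},\bs\xi,\bb B^{2K})\leq\epsilon_0$ with probability at least $1-\eta'$. (The condition $M\geq 2K$ of Theorem~\ref{thm:consist-impose-closeness} is automatically met by this lower bound, the claim being trivial otherwise since $\|\bs x-\bsxalt\|\leq 2$.)

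Finally I would union-bound over the at most $\binom N{2K}$ admissible $\cl S$, choosing $\eta'=\eta/\binom N{2K}$ so that $\log\frac1{2\eta'}=\log\binom N{2K}+\log\frac1{2\eta}$. On the resulting event, of probability at least $1-\eta$, all per-subspace estimates hold simultaneously and hence $\cl E_\delta(\bs\Phi,\bs\xi,\Sigma_K\cap\bb B^N)\leq\epsilon_0$. It only remains to simplify the sample-complexity constant: with $\binom N{2K}\leq(eN/2K)^{2K}$,
$$2K\log\tfrac{29\sqrt{2K}}{\epsilon_0}+\log\binom N{2K}\ \leq\ 2K\log\!\Big(\tfrac{29\sqrt{2K}}{\epsilon_0}\cdot\tfrac{eN}{2K}\Big)\ =\ 2K\log\tfrac{29e\,N}{\sqrt{2K}\,\epsilon_0}\ \leq\ 2K\log\tfrac{56N}{\sqrt K\,\epsilon_0},$$
since $29e/\sqrt 2<56$, which reproduces exactly the hypothesis on $M$ in the statement. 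The extension to a general sparsity basis $\bs\Psi\neq\Id_N$ is then immediate from the rotational invariance of $\bs\Phi$ already noted after~\eqref{eq:qcs-unif-dith-model}; for $\bs\Psi=\Id_N$ the matrices $\bs\Phi_{\cl S}$ are Gaussian by construction, so nothing further is needed.

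I do not expect a real obstacle beyond what is already internalized in Theorem~\ref{thm:consist-impose-closeness}/Lemma~\ref{lemma:main-proba-bound}; the step most prone to error, rather than a genuine difficulty, is twofold: (i) the consistency constraint only controls the difference $\bs x-\bsxalt$, which is $2K$-sparse, so the covering must range over $2K$-dimensional coordinate sections — this is why the statement carries $2K$ in place of $K$; and (ii) the sparsity-pattern entropy $\log\binom N{2K}$ must be folded together with the $2K$-dimensional ball-covering entropy $2K\log(29\sqrt{2K}/\epsilon_0)$ into the single logarithm $2K\log(56N/(\sqrt K\epsilon_0))$ while keeping the numerical constants honest.
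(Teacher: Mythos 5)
Your proposal is correct and follows essentially the same route as the paper: fix a $2K$-support (since $\bs x-\bsxalt$ is $2K$-sparse), treat the restricted problem as the $2K$-dimensional unit-ball (GRFCQ) case with the Gaussian submatrix $\bs\Phi_{\cl S}$, union-bound over the $\binom{N}{2K}\leq(eN/2K)^{2K}$ supports, and absorb the support entropy into the logarithm using $29e/\sqrt 2<56$. The only cosmetic difference is that the paper reuses its explicit failure-probability bound \eqref{eq:pfail-grfcq} for each fixed support rather than invoking Theorem~\ref{thm:consist-impose-closeness} with a rescaled failure probability $\eta'=\eta/\binom{N}{2K}$, which is the same computation.
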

\begin{proof}
See Sec.~\ref{sec:extension-k-sparse}.  
\end{proof}

As a corollary of those two theorems, the asymptotic decay of~$\cl
E_\delta$ as a function of~$M$,~$N$,~$K$,~$\delta$ and of the
probability~$\eta$ can be established for both GRFCQ and QCS. 
\begin{corollary}[Proximity decay for consistent vectors]
\label{cor:bound-distance}
Given~$M\geq 0$,~$0< \eta < 1$,~$\delta>0$ with~$\delta =
O(1)$, there exists two constants~${C,C'>0}$ such that
\begin{equation}
  \label{eq:worst-case-error-GRFCQ}
  \bb P\big[\cl E_\delta(\bs\Phi, \bs \xi,\bb B^N) \leq C \big(
  \tfrac{N}{M}\log\tfrac{M}{\sqrt N} + \log\tfrac{1}{2\eta} \big) \big]
  \geq 1 - \eta,
\end{equation}
for GRFCQ, and 
\begin{equation}
  \label{eq:worst-case-error-QCS}
  \bb P\big[\cl E_\delta(\bs \Phi, \bs \xi, \Sigma_K \cap \bb B^N) \leq C' \big(\tfrac{K}{M} \log(\tfrac{MN}{\sqrt{K^3}}) + \log\tfrac{1}{2\eta} \big) \big]
  \geq 1 - \eta,
\end{equation}
for QCS, where these probabilities are computed with respect to both~$\bs \Phi \sim \cl N^{M\times
  N}(0,1)$ and the dithering~$\bs \xi \sim \cl
U^M([0, \delta])$.
\end{corollary}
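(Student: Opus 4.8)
The plan is to \emph{invert} the quantitative hypotheses of Theorems~\ref{thm:consist-impose-closeness} and~\ref{thm:consist-impose-closeness-saprse-case}: given $M$, I would identify the smallest accuracy $\epsilon_0$ compatible with the corresponding lower bound on $M$ and check that this $\epsilon_0$ has the announced order. Throughout I would use that $\cl K \subset \bb B^N$ in both settings, so that $\cl E_\delta(\bs\Phi,\bs\xi,\cl K) \leq \diam(\bb B^N) = 2$ holds deterministically; hence whenever the right-hand side of~\eqref{eq:worst-case-error-GRFCQ} or~\eqref{eq:worst-case-error-QCS} exceeds $2$ the claim is trivial, and I may restrict to the regime where the target bound — and the $\epsilon_0$ I pick — is $\leq 2$ and $M$ is large relative to the complexity term $N$ (resp.\ $K$). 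In that regime the inversion is routine.

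For~\eqref{eq:worst-case-error-GRFCQ}, I would write the hypothesis of Theorem~\ref{thm:consist-impose-closeness} as $M \geq \big(\tfrac{4\delta}{\epsilon_0}+2\big)\big(N\log\tfrac{29\sqrt N}{\epsilon_0}+\log\tfrac{1}{2\eta}\big)$. The only nonlinearity in $\epsilon_0$ is logarithmic, and I would remove it by a standard self-referential estimate: if the candidate $\epsilon_0$ satisfies $\epsilon_0 \geq N/M$, then $\tfrac{29\sqrt N}{\epsilon_0}\leq 29\,\tfrac{M}{\sqrt N}$, so $N\log\tfrac{29\sqrt N}{\epsilon_0}+\log\tfrac{1}{2\eta} \leq T$ with $T := N\log\tfrac{M}{\sqrt N}+N\log 29+\log\tfrac{1}{2\eta}$. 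Choosing $\epsilon_0 := \max\{\,N/M,\ 8\delta T/M\,\}$ then gives $\tfrac{4\delta}{\epsilon_0}\leq\tfrac{M}{2T}$ by construction and, in the nontrivial regime $M\geq 4T$, also $2\leq\tfrac{M}{2T}$, so that $\big(\tfrac{4\delta}{\epsilon_0}+2\big)\big(N\log\tfrac{29\sqrt N}{\epsilon_0}+\log\tfrac{1}{2\eta}\big)\leq\tfrac{M}{T}\,T = M$; hence Theorem~\ref{thm:consist-impose-closeness} applies with this $\epsilon_0$ and yields $\cl E_\delta(\bs\Phi,\bs\xi,\bb B^N)\leq\epsilon_0$ with probability at least $1-\eta$. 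Since $\delta = O(1)$, this $\epsilon_0 \leq (N + 8\delta T)/M = O\big(\tfrac{N}{M}\log\tfrac{M}{\sqrt N}+\tfrac{1}{M}\log\tfrac{1}{2\eta}\big) = O\big(\tfrac{N}{M}\log\tfrac{M}{\sqrt N}+\log\tfrac{1}{2\eta}\big)$, which is~\eqref{eq:worst-case-error-GRFCQ} for a suitable $C$; the complementary regime $M<4T$ produces (for $C$ large enough) a target bound $\geq 2$ and is covered by the trivial case.

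For~\eqref{eq:worst-case-error-QCS} I would argue identically, using the substitution dictated by Theorem~\ref{thm:consist-impose-closeness-saprse-case}, whose hypothesis reads $M\geq\big(\tfrac{4\delta}{\epsilon_0}+2\big)\big(2K\log\tfrac{56N}{\sqrt K\,\epsilon_0}+\log\tfrac{1}{2\eta}\big)$. Now $\epsilon_0\geq K/M$ gives $\tfrac{56N}{\sqrt K\,\epsilon_0}\leq 56\,\tfrac{MN}{\sqrt{K^3}}$, hence $2K\log\tfrac{56N}{\sqrt K\,\epsilon_0}+\log\tfrac{1}{2\eta}\leq T'$ with $T' := 2K\log\tfrac{MN}{\sqrt{K^3}}+2K\log 56+\log\tfrac{1}{2\eta}$; taking $\epsilon_0 := \max\{\,K/M,\ 8\delta T'/M\,\}$ and repeating the previous computation shows Theorem~\ref{thm:consist-impose-closeness-saprse-case} applies with this $\epsilon_0$, and $\delta = O(1)$ then yields $\epsilon_0 = O\big(\tfrac{K}{M}\log\tfrac{MN}{\sqrt{K^3}}+\log\tfrac{1}{2\eta}\big)$, i.e.~\eqref{eq:worst-case-error-QCS}.

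I do not expect any genuinely hard step here: all the content sits in Theorems~\ref{thm:consist-impose-closeness} and~\ref{thm:consist-impose-closeness-saprse-case} (and, through them, in Lemma~\ref{lemma:main-proba-bound}), the corollary being merely an inversion together with the bookkeeping of absolute constants and the handling of small $M$ via $\cl E_\delta\leq 2$. The one point that needs care is the self-referential logarithmic estimate: $\epsilon_0$ must be taken at least $N/M$ (resp.\ $K/M$) for the replacement of $\log\tfrac{1}{\epsilon_0}$ by a term of the form $\log\tfrac{M}{\sqrt N}$ (resp.\ $\log\tfrac{MN}{\sqrt{K^3}}$) to be valid, which is why $\epsilon_0$ is defined as a maximum of two quantities; the degenerate parameter ranges where the claimed rate is not informative (e.g.\ $M$ comparable to or below the complexity term — impossible for GRFCQ since $M\geq N$) are absorbed into $C$ and $C'$ through the trivial bound.
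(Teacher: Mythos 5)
Your proposal is correct and takes essentially the same route as the paper: Corollary~\ref{cor:bound-distance} is obtained by inverting the hypotheses of Theorems~\ref{thm:consist-impose-closeness} and~\ref{thm:consist-impose-closeness-saprse-case}, with the implicit $\log(1/\epsilon_0)$ removed self-referentially — the paper saturates the condition on $M$ and bounds $\tfrac{29\sqrt N}{\epsilon_0}\leq\tfrac{5M}{2\sqrt N}$ from that very condition, while you build the same effect into the explicit choice $\epsilon_0=\max\{N/M,\,8\delta T/M\}$ (resp.\ $\max\{K/M,\,8\delta T'/M\}$), a purely cosmetic difference. The one caveat, that for $\eta>1/2$ the term $\log\tfrac{1}{2\eta}$ is negative so your ``trivial regime'' claim (and indeed the literal statement of the corollary) can fail, is a sloppiness already present in the paper's own argument, which implicitly treats that term as nonnegative when deriving $\tfrac{M}{\sqrt N}\geq\tfrac{4}{\epsilon_0}\sqrt N\log\tfrac{29\sqrt N}{\epsilon_0}$.
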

\begin{proof}
See Sec.~\ref{sec:main-result} for GRFCQ and
Sec.~\ref{sec:extension-k-sparse} for QCS.
\end{proof}

Loosely speaking, this corollary says that if~$\delta = O(1)$, with probability exceeding~$1-\eta$, 
\begin{equation}
  \label{eq:fcq-bound}
  \cl E_\delta(\bs \Phi, \bs \xi, \bb B^N) =
  O(\tfrac{N}{M}\log\tfrac{M}{\sqrt N} + \log\tfrac{1}{2\eta}),  
\end{equation}
for GRFCQ, and
\begin{equation}
  \label{eq:worst-case-error-sparse-case}
  \cl E_\delta(\bs \Phi, \bs \xi, \Sigma_K \cap \bb B^N) = O(\tfrac{K}{M} \log(\tfrac{MN}{\sqrt{K^3}}) + \log\tfrac{1}{2\eta}),   
\end{equation}
for QCS. As explained in Sec.~\ref{sec:prior-works}, this matches existing
error bounds for 1-bit compressed sensing in the case of Gaussian
random projections~\cite{jacques2013robust}. It also improves upon previous
known bounds, decaying as~$O(1/{\sqrt M})$ for linear reconstruction methods
in FCQ~\cite{goyal_1998_lowerbound_qc} and as~$O(\sqrt{{K}/{M}})$ for QCS, while a
known lower bound in~$\Omega({K}/{M})$ exists~\cite{BoufChapter}.  Our
result behaves also similarly to the bound on the mean worst-case
error (established with respect to~$\bs \Phi$ and~$\bs \xi$ in
the context of our notation) of consistent reconstruction methods obtained in
\cite{powell_consistent} in the case of random frames over~$\bb S^{N-1}$.

Our last contribution shows that small deviations to strict
consistency are also possible while keeping control of the proximity between
almost-consistent vectors. This allows us to consider a moderate 
corruption of the sensing model~\eqref{eq:qcs-unif-dith-model}, \eg in the case where it suffers from a prequantization noise~$\bs n \in \bb R^M$ such that, for any
$\bs x \in \cl K$, the bound
\begin{equation}
  \label{eq:corrupted-QCS-model}
  \|\cl Q_\delta((\bs \Phi\bs x + \bs n) + \bs \xi) - \cl Q_\delta(\bs
  \Phi\bs x + \bs \xi)\|_1 \leq s\,\delta 
\end{equation}
holds with high probability for some~$s>0$.  In such a case, we can
relax the formal reconstruction~\eqref{eq:consistent-rec} and attempt
to reconstruct~$\bsxalt$ from the program:
\begin{multline}
  \label{eq:consistent-rec-robust}
  \text{find any}\ \bsxalt\in \bb R^N\ \text{such that}\\ \|\cl Q_\delta(\bs \Phi\bsxalt + \bs \xi) - \cl Q_\delta((\bs \Phi\bs
  x + \bs n ) + \bs \xi)\|_1\leq s\,\delta\\ \text{and}\ \bsxalt \in \cl K,
\end{multline}
By construction, we observe that the solution~$\bsxalt \in \cl
  K$ is such that 
  \begin{align*}
&\|\cl Q_\delta(\bs \Phi\bsxalt + \bs \xi) - \cl Q_\delta(\bs \Phi\bs
  x  + \bs \xi)\|_1\\
&\leq \|\cl Q_\delta(\bs \Phi\bsxalt + \bs \xi) - \cl Q_\delta((\bs \Phi\bs
  x + \bs n ) + \bs \xi)\|_1\\
&\quad + \|\cl Q_\delta((\bs \Phi\bs
  x + \bs n ) + \bs \xi) - \cl Q_\delta(\bs \Phi\bs x + \bs \xi)\|_1
  \leq 2 s \delta.
\end{align*}
Therefore, characterizing the robustness of~\eqref{eq:consistent-rec-robust}
amounts to studying the proximity of vectors having approximately consistent
quantized random projections, \ie we have to analyse the ``largest'' relaxed consistency cell
$\cl C^r_{\bs x} := \{\bsxalt \in \cl K: \|\cl Q_{\delta}(\bs \Phi \bsxalt + \bs \xi)
- \cl Q_{\delta}(\bs \Phi \bs x + \bs \xi)\|_1 \leq r\,\delta \}$ for
$r = 2s >0$, through the worst-case error  
$$
\cl E^r_\delta(\bs \Phi, \bs \xi, \cl K) := \max_{\bs x \in \cl K} \max_{\bsxalt \in \cl C^r_{\bs x}} \|\bs x -
\bsxalt\|.
$$ 

As shown in the next theorem, this is done first in the case
where~$r=O(1)$ relatively to~$M$, \ie for ``almost perfect consistency''.
\begin{theorem}[Proximity decay for almost perfectly consistent vectors]
\label{thm:bound-distance-almost-consist}
Given~$M\geq 0$,~$0< \eta < 1$,~$\delta>0$ with~$\delta =
O(1)$, and~$r\in \bb N$ with~$r = O(1)$, there exists two constants~$C,C'>0$ such that
\begin{equation}
  \label{eq:worst-case-error-GRFCQ-almost-consist}
  \hspace{-3mm}\bb P\big[\cl E^r_\delta(\bs\Phi, \bs \xi,\bb B^N) \leq C \tfrac
  {N+r}{M} \big(\log(\tfrac{M \max(N,M)}{N}) +
\log \tfrac{1}{2\eta} \big) \big]  \geq 1 - \eta,
\end{equation}
for GRFCQ, and 
\begin{equation}
  \label{eq:worst-case-error-sparse-case-almost-consist}
  \bb P\big[\cl E^r_\delta(\bs \Phi, \bs \xi, \Sigma_K \cap \bb B^N)
  \leq C' \tfrac {K+r}{M} \big(\log(\tfrac{M \max(N,M)}{K}) +
\log \tfrac{1}{2\eta} \big) \big] 
  \geq 1 - \eta,
\end{equation}
for QCS, where these probabilities are computed with respect to both~$\bs \Phi \sim \cl N^{M\times
  N}(0,1)$ and the dithering~$\bs \xi \sim \cl
U^M([0, \delta])$.
\end{theorem}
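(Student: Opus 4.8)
The plan is to reduce Theorem~\ref{thm:bound-distance-almost-consist} to the strict-consistency results Theorem~\ref{thm:consist-impose-closeness} and Theorem~\ref{thm:consist-impose-closeness-saprse-case} via a ``missing-coordinates'' decomposition of the relaxed consistency cell. First I observe that every component of $\cl Q_\delta(\bs\Phi\bsxalt+\bs\xi)-\cl Q_\delta(\bs\Phi\bs x+\bs\xi)$ lies in $\delta\Zbb$, so a nonzero component contributes at least $\delta$ to the $\ell_1$-norm; hence $\bsxalt\in\cl C^r_{\bs x}$ forces the quantized vectors of $\bs x$ and $\bsxalt$ to disagree on a set of indices of cardinality at most $r$. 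Writing, for $\cl S\subseteq[M]$, the \emph{strict} consistency cell built from the sub-frame $\bs\Phi_{[M]\setminus\cl S}$ and sub-dither $\bs\xi_{[M]\setminus\cl S}$ as $\cl C^{([M]\setminus\cl S)}_{\bs x}$, padding the disagreement set to size exactly $r$ gives the inclusion $\cl C^r_{\bs x}\subseteq\bigcup_{\cl S\subseteq[M],\,\#\cl S=r}\cl C^{([M]\setminus\cl S)}_{\bs x}$ (valid whenever $M\geq r$, which the final sample-complexity condition ensures), and therefore
$$\cl E^r_\delta(\bs\Phi,\bs\xi,\cl K)\ \leq\ \max_{\cl S\subseteq[M],\,\#\cl S=r}\cl E_\delta\big(\bs\Phi_{[M]\setminus\cl S},\bs\xi_{[M]\setminus\cl S},\cl K\big).$$

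For each fixed $\cl S$ of size $r$, the sub-matrix $\bs\Phi_{[M]\setminus\cl S}$ is distributed as $\cl N^{(M-r)\times N}(0,1)$ and $\bs\xi_{[M]\setminus\cl S}$ as $\cl U^{M-r}([0,\delta])$, so Theorem~\ref{thm:consist-impose-closeness} (for $\cl K=\bb B^N$) or Theorem~\ref{thm:consist-impose-closeness-saprse-case} (for $\cl K=\Sigma_K\cap\bb B^N$) applies verbatim with $M$ replaced by $M-r$ and the confidence $\eta$ replaced by $\eta'$. I would then union-bound over the $\binom{M}{r}\leq M^r$ choices of $\cl S$, setting $\eta'=\eta/M^r$, i.e. $\log\tfrac1{2\eta'}=r\log M+\log\tfrac1{2\eta}$. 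This yields: if
$$M-r\ \geq\ \tfrac{4\delta+2\epsilon_0}{\epsilon_0}\Big(N\log\tfrac{29\sqrt N}{\epsilon_0}+r\log M+\log\tfrac1{2\eta}\Big)$$
(with $2K\log\tfrac{56N}{\sqrt K\epsilon_0}$ in place of $N\log\tfrac{29\sqrt N}{\epsilon_0}$ in the QCS case), then $\cl E^r_\delta(\bs\Phi,\bs\xi,\cl K)\leq\epsilon_0$ with probability at least $1-\eta$.

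The last step is to invert this implicit relation between $\epsilon_0$ and $M$ under $\delta=O(1)$, $r=O(1)$, exactly as Corollary~\ref{cor:bound-distance} is deduced from Theorems~\ref{thm:consist-impose-closeness}--\ref{thm:consist-impose-closeness-saprse-case}. Since $\tfrac{4\delta+2\epsilon_0}{\epsilon_0}=\Theta(1/\epsilon_0)$ and $M-r=\Theta(M)$ in the relevant range, the condition reads $\epsilon_0\gtrsim\tfrac1M\big(N\log(\sqrt N/\epsilon_0)+r\log M+\log\tfrac1{2\eta}\big)$; substituting the ansatz $\epsilon_0\asymp\tfrac{N+r}{M}\big(\log(\cdots)+\log\tfrac1{2\eta}\big)$ and using, in that regime, $N\log(\sqrt N/\epsilon_0)+r\log M\leq(N+r)\log(M\max(N,M)/N)$ (with $K$ replacing $N$ in the numerator for QCS) produces the stated bounds~\eqref{eq:worst-case-error-GRFCQ-almost-consist} and~\eqref{eq:worst-case-error-sparse-case-almost-consist}.

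I do not expect a deep obstacle here: the entire difficulty (the Buffon-needle concentration estimate of Lemma~\ref{lemma:main-proba-bound}) has already been absorbed into Theorems~\ref{thm:consist-impose-closeness} and~\ref{thm:consist-impose-closeness-saprse-case}, which we invoke as black boxes, so no new probabilistic work and no net-to-continuum bridging argument is needed. The only steps requiring care are the combinatorial union bound over the $\binom{M}{r}$ index sets (turning the $r$ withheld measurements into the extra $r\log M$ term) and the routine but slightly fiddly fixed-point inversion that converts ``$N\log(\cdots)+r\log M$ complexity'' into the multiplicative factor $(N+r)/N$ (resp. $(K+r)/K$); in particular one must check that the sample-complexity condition forces $M>r$ and keeps $\epsilon_0=O(1)$, so that the $\Theta$-estimates used in the inversion are legitimate and the log-argument simplification above is valid.
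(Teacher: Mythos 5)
Your proposal is correct and follows essentially the same route as the paper: identify that at most $r$ quantized measurements can disagree, union-bound over the $\binom{M}{r}$ possible disagreement sets while applying the strict-consistency failure bound to the remaining $M-r$ Gaussian rows (the paper reuses its intermediate bounds \eqref{eq:pfail-grfcq}--\eqref{eq:pfail-qcs} directly rather than re-invoking the theorems with $\eta'=\eta/M^r$, and uses $\binom{M}{r}\leq(eM/r)^r$ instead of $M^r$, but this is the same argument), and then invert the condition on $M$ via the rescaling $\epsilon_0\mapsto\tfrac{N+r}{N}\epsilon_0'$ (resp. $\tfrac{K+r}{K}\epsilon_0'$), which is exactly your ansatz step.
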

\begin{proof}
See Sec.~\ref{sec:a-almost-perfect}.
\end{proof}

This theorem points out that, in \emph{almost perfect consistent}
reconstruction regime, \eg if the variance of the prequantization noise
components~$n_j$ rapidly decreases with~$j$ in~\eqref{eq:corrupted-QCS-model} the impact of the noise~$\bs n$ on the
reconstruction of~$\bs x$ is controlled and does not change the
asymptotic decay of the worst-case reconstruction error. Loosely
speaking,~$\cl E^r_\delta$ behaves like
$\cl E_\delta$ up to a multiplication by
$(N+r)/N$ for GRFCQ and by~$(K+r)/K$ for QCS. 

However, assuming as above that~$\bs n$ vanishes with its component index is a rather rare scenario as noise is often considered as a
stationary phenomenon, \ie it is more reasonable to assume an equal
probability of corruption on all observations. In order to address
this more realistic situation, we show that in the case of
\emph{proportional inconsistency} where we only know that~$r \leq \rho M$ for some
constant~$0<\rho<1$, the worst-case reconstruction error suffers from
a systematic bias induced by our ignorance of the indices of the
corrupted quantized projections. This situation occurs for instance if~\eqref
{eq:corrupted-QCS-model} is corrupted by a homoscedastic, zero-mean
noise~$\bs n \in \bb R^M$, \ie ${\rm Var}(n_i) = \sigma^2$ for~$i\in [M]$ and
$\sigma > 0$. More specifically, if~$\bs n \sim \cl N(0,\sigma^2
\Id_M)$, using the law of total expectation sequentially over the
  dithering and on the noise, we have that~$\bb E \|\cl Q_\delta(\bs\Phi\bs x + \bs n + \bs \xi) -
  Q_\delta(\bs\Phi\bs x + \bs \xi)\|_1 = \sqrt{2/\pi}\,\sigma
  M$, \ie $s$ tends to~$\sqrt{2/\pi}(\sigma/\delta)
  M$ for large~$M$.

In particular, we demonstrate the following result. 
\begin{theorem}[Proximity decay for proportionally inconsistent vectors]
\label{thm:propor-consist}
Let~$0<\rho<1$ be such that 
\begin{equation}
  \label{eq:cond-rho}
  \bar\rho := \rho\,(1 + 2\log(e/\rho)) < 1.  
\end{equation}
If
\begin{equation}
  \label{eq:conc-M-almost-consist-GRFCQ}
\ts M \geq \tfrac{4\delta + 4}{\epsilon_0}\big(N \log \frac{29 \sqrt  N}{\epsilon_0} + \log \frac{1}{2\eta}\big),
\end{equation}
for GRFCQ and~$\cl K = \bb R^N$, or if 
\begin{equation}
  \label{eq:conc-M-almost-consist-QCS}
M \geq \tfrac {4\delta + 2}{\epsilon_0} \big( 2K \log(\tfrac{56
  N}{\sqrt{K}\epsilon_0}) + \log \tfrac{1}{2\eta} \big),  
\end{equation}
for QCS and~$\cl K = \Sigma_K \cap \bb B^N$, then, with probability at
least~$1-\eta$,  
\begin{equation}
\label{eq:thm-almost-consist}
\forall \bs x,\bsxalt\in \cl K, \| \cl Q_\delta(\bs \Phi \bs x + \bs
\xi)-\cl Q_\delta(\bs \Phi \bsxalt + \bs \xi)\|_1 \leq \rho\delta M\
\Rightarrow\quad\|\bs x - \bsxalt\|\leq C_\rho \epsilon_0 + D_\rho \delta,
\end{equation}
with~${C_\rho := \frac{1}{1-\bar\rho} \geq 1}$ and~$D_\rho :=
4 \rho\, C_\rho \log(e/\rho) \geq 4
\rho$.   
\end{theorem}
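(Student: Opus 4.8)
The plan is to deduce the proportional-inconsistency statement from the strict-consistency Theorems~\ref{thm:consist-impose-closeness} and~\ref{thm:consist-impose-closeness-saprse-case}, applied not to the whole sensing but to every sufficiently large \emph{sub-frame}, and to pay for the data-dependent choice of that sub-frame through a union bound over all index subsets of the appropriate size. \textbf{Step 1 (reduction to a large consistent sub-frame).} Write $\bs q := \cl Q_\delta(\bs\Phi\bs x+\bs\xi)$ and $\bs q' := \cl Q_\delta(\bs\Phi\bsxalt+\bs\xi)$; each nonzero entry of $\bs q-\bs q'$ has modulus at least $\delta$, so the hypothesis $\|\bs q-\bs q'\|_1\leq\rho\delta M$ forces $\#\{j:q_j\neq q'_j\}\leq\lfloor\rho M\rfloor$. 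Hence there is an index set $\cl S\subseteq[M]$ with $\#\cl S\geq m:=M-\lfloor\rho M\rfloor\geq(1-\rho)M$ on which $\bs x$ and $\bsxalt$ are strictly consistent, i.e. $\cl Q_\delta(\bs\Phi_{\cl S}\bs x+\bs\xi_{\cl S})=\cl Q_\delta(\bs\Phi_{\cl S}\bsxalt+\bs\xi_{\cl S})$; since consistency is inherited by subsets, I may take $\#\cl S=m$ exactly.

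\textbf{Step 2 (union bound over sub-frames).} For a fixed deterministic $\cl S$ with $\#\cl S=m$, the pair $(\bs\Phi_{\cl S},\bs\xi_{\cl S})$ is distributed exactly as an $m$-row Gaussian random frame together with its dithering, so Theorem~\ref{thm:consist-impose-closeness} (GRFCQ) — respectively Theorem~\ref{thm:consist-impose-closeness-saprse-case} (QCS) — applies to it with $M$ replaced by $m$, with a target precision $\epsilon_0'$ to be chosen, and with failure probability $\eta':=\eta/\binom{M}{m}$. Taking a union bound over the $\binom{M}{m}=\binom{M}{\lfloor\rho M\rfloor}\leq(e/\rho)^{\rho M}$ choices of $\cl S$, with probability at least $1-\eta$ every pair of $\cl K$ that is strictly consistent on \emph{some} $m$-subset — in particular $(\bs x,\bsxalt)$ — satisfies $\|\bs x-\bsxalt\|\leq\epsilon_0'$, provided
\begin{equation*}
(1-\rho)M\ \geq\ \tfrac{4\delta+2\epsilon_0'}{\epsilon_0'}\Big(N\log\tfrac{29\sqrt N}{\epsilon_0'}+\log\tfrac{1}{2\eta}+\rho M\log(e/\rho)\Big)
\end{equation*}
in the GRFCQ case (and the same inequality with $N\log\tfrac{29\sqrt N}{\epsilon_0'}$ replaced by $2K\log\tfrac{56N}{\sqrt K\,\epsilon_0'}$ in the QCS case). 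For GRFCQ with the unbounded $\cl K=\bb R^N$, I would first note that consistency on $\cl S$ already forces $\|\bs\Phi_{\cl S}(\bs x-\bsxalt)\|_\infty<\delta$, which together with the Gaussian lower frame bound confines $\bs x-\bsxalt$ to a ball of radius $O(\delta)$, after which the covering step of Theorem~\ref{thm:consist-impose-closeness} is run on this localized difference set.

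\textbf{Step 3 (choosing $\epsilon_0'$).} It remains to exhibit $\epsilon_0'\leq C_\rho\epsilon_0+D_\rho\delta$ meeting the inequality of Step 2. I would take $\epsilon_0':=C_\rho\epsilon_0+D_\rho\delta=C_\rho\big(\epsilon_0+4\rho\delta\log(e/\rho)\big)$, so that $\epsilon_0'\geq\epsilon_0$ (as $C_\rho\geq1$) and hence $\log\tfrac{29\sqrt N}{\epsilon_0'}\leq\log\tfrac{29\sqrt N}{\epsilon_0}$. Bounding $N\log\tfrac{29\sqrt N}{\epsilon_0}+\log\tfrac{1}{2\eta}\leq\tfrac{\epsilon_0}{4\delta+4}\,M$ via hypothesis~\eqref{eq:conc-M-almost-consist-GRFCQ}, the inequality of Step 2 reduces, after dividing through by $M$, to a purely numerical estimate. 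Using $\epsilon_0'\geq D_\rho\delta=4\rho C_\rho\log(e/\rho)\,\delta$ one checks that $\tfrac{4\delta+2\epsilon_0'}{\epsilon_0'}\,\rho\log(e/\rho)\leq\tfrac{1}{C_\rho}+2\rho\log(e/\rho)=1-\rho$, with genuine slack as soon as $\epsilon_0>0$; the definitions $\bar\rho=\rho(1+2\log(e/\rho))$, $C_\rho=(1-\bar\rho)^{-1}$, $D_\rho=4\rho C_\rho\log(e/\rho)$ are calibrated precisely so that condition~\eqref{eq:cond-rho}, $\bar\rho<1$, turns this slack into room for the residual term $\tfrac{4\delta+2\epsilon_0'}{\epsilon_0'}\cdot\tfrac{\epsilon_0}{4\delta+4}$. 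The QCS case is verbatim, with~\eqref{eq:conc-M-almost-consist-QCS} and Theorem~\ref{thm:consist-impose-closeness-saprse-case} in place of their GRFCQ counterparts.

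\textbf{Main obstacle.} The one delicate point is the combinatorial cost: the union over the $\binom{M}{\rho M}$ sub-frames injects an extra $\rho M\log(e/\rho)$ into the ``complexity budget'' of each invocation of Theorems~\ref{thm:consist-impose-closeness}--\ref{thm:consist-impose-closeness-saprse-case}, i.e. it consumes a fixed \emph{fraction} of the available oversampling $M$. Absorbing it is possible only when that fraction stays below one, which is exactly~\eqref{eq:cond-rho} (and is guaranteed by $\rho<1/10$), and doing so forces the effective precision up from $\epsilon_0$ to $C_\rho\epsilon_0+D_\rho\delta$; the surviving additive term $D_\rho\delta=\Theta(\rho\delta\log(e/\rho))$ is a genuine bias floor — no amount of oversampling removes it — matching the $\Omega(\rho\delta)$ behavior announced in the Introduction. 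Getting the constants to line up so that $\epsilon_0'\leq C_\rho\epsilon_0+D_\rho\delta$ with the stated $C_\rho,D_\rho$, rather than merely $\epsilon_0'=O_\rho(\epsilon_0+\delta)$, is the only arithmetically fussy part.
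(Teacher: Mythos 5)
Your proposal is correct and follows essentially the same route as the paper: reduce the hypothesis $\|\cl Q_\delta(\bs\Phi\bs x+\bs\xi)-\cl Q_\delta(\bs\Phi\bsxalt+\bs\xi)\|_1\leq\rho\delta M$ to strict consistency on a sub-frame of size $M-\lfloor\rho M\rfloor$, union bound over the at most $(e/\rho)^{\rho M}$ such sub-frames (the paper does this via its intermediate bounds \eqref{eq:pfail-grfcq}--\eqref{eq:pfail-qcs} rather than by re-invoking Theorems~\ref{thm:consist-impose-closeness}--\ref{thm:consist-impose-closeness-saprse-case} with a shrunken $\eta$, which is equivalent), and absorb the resulting $\rho M\log(e/\rho)$ term through exactly the change of variable $\epsilon_0\mapsto C_\rho\epsilon_0+D_\rho\delta$ that defines $C_\rho$ and $D_\rho$. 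The only superfluous element is your localization detour for $\cl K=\bb R^N$: the paper's own proof (and the $\epsilon_0'\leq 2$ bound that your residual estimate implicitly needs) takes $\cl K=\bb B^N$ for GRFCQ, the $\bb R^N$ in the statement being an apparent typo.
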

\begin{proof}
See Sec.~\ref{sec:prop-inconsist}.    
\end{proof}

In the error bound~$C_\rho \epsilon_0 + D_\rho \delta$, the announced bias is
materialized by the constant second term~$D_\rho \delta \geq 4 \rho
\delta$. Conversely to the term~$C_\rho \epsilon_0$ that can be made
arbitrarily small by reducing~$\epsilon_0$ and increasing~$M$, this
bias limits our capability to approach~$\bs x$ with~$\bsxalt$, or
equivalently to
estimate~$\bs x$ from its corrupted quantized observation in the
reconstruction program~\eqref{eq:consistent-rec-robust}.

\begin{remark}
Notice that the condition~\eqref{eq:cond-rho}
holds if~$\rho \leq 1/10$. Moreover, since for~$\rho = 1/10$,~$C_\rho < 4.2~$ and~$D_\rho < 1.7$, and
since both~$C_\rho$ and~$D_\rho$ are convex and non-decreasing over~$\rho \in (0,1/10]$,
we get more simply over this interval
$$
\|\bs x - \bsxalt\|\ \leq\ C_\rho \epsilon_0 + D_\rho \delta\ \leq\ 4.2 \epsilon_0 + 17 \rho\delta,
$$
under the same conditions as above. 
\end{remark}

\begin{remark}
The conditions \eqref{eq:conc-M-almost-consist-GRFCQ}
and \eqref{eq:conc-M-almost-consist-QCS} are basically unchanged compared to
those imposed on~$M$ in Theorems~\ref{thm:consist-impose-closeness} and
\ref{thm:consist-impose-closeness-saprse-case}, respectively. Therefore, the
reader can easily show that the reasoning providing
Corollary~\ref{cor:bound-distance} applies here if we saturate the
conditions on~$M$ in Theorem~\ref{thm:propor-consist} in order to study the decay of $\epsilon_0$. In
particular, considering the first remark, for $r\leq\rho M$
with $\rho<1/10$ and $\delta =
O(1)$ relatively to $M$, we have with probability exceeding~$1-\eta$,
$$  
\cl E^r_\delta(\bs \Phi, \bs \xi, \bb B^N) = O(\rho \delta + \tfrac{N}{M}\log\tfrac{M}{\sqrt N} + \log\tfrac{1}{2\eta}),  
$$
for GRFCQ, and
$$
  \cl E^r_\delta(\bs \Phi, \bs \xi, \Sigma_K \cap \bb B^N) =  O(\rho
  \delta + \tfrac{K}{M} \log(\tfrac{MN}{\sqrt{K^3}}) + \log\tfrac{1}{2\eta}),   
$$
for QCS.
\end{remark}

\begin{remark}
\label{rem:tight-bias}
Notice finally that the second term $D_\rho \delta$ in \eqref{eq:thm-almost-consist}
representing the constant bias induced by the proportional
inconsistency of $\bs x$ and $\bsxalt$ is actually
\emph{necessary}. In the case of QCS, taking a
$K$-sparse vector $\bs x \in \cl K := \Sigma_K \cap \bb B^N$, for $\bsxalt = \bs x +
\lambda \delta \bs e_i$ with $i \in \supp \bs x$ and $\lambda \in \bb R$ such that $\bsxalt \in
\cl K$, it is easy to see that for large
$M$, by the law of large numbers,  
\begin{equation*}
\ts \| \cl Q_\delta(\bs \Phi \bs x + \bs
\xi)-\cl Q_\delta(\bs \Phi \bsxalt + \bs \xi)\|_1\
= \sum_i | \cl Q_\delta(z_i)-\cl Q_\delta(z_i + \lambda \delta
\Phi_{i1})|\  
\approx\ M \delta\,\bb E X,
\end{equation*}
with $\bs z := \bs \Phi \bs x + \bs
\xi$ and $X := \#\{\delta \bb Z \cap [u, u +
\lambda \delta g]\}$ with $u \sim \cl U([0,\delta])$ and
$g\sim \cl N(0,1)$, \ie $X$ is the discrete random variable counting the
elements of $\delta \bb Z$ falling in $[u, u +
\lambda \delta g]$.  We can then determine that $\bb E X = \bb E_g \bb E_u(X|g) = \bb E_g
\tfrac{|\lambda| \delta |g|}{\delta} = \tfrac{2}{\pi} |\lambda|$ (see,
\eg \cite{jacques2013quantized}), so
that 
$$
\| \cl Q_\delta(\bs \Phi \bs x + \bs
\xi)-\cl Q_\delta(\bs \Phi \bsxalt + \bs \xi)\|_1 \approx
\tfrac{2}{\pi} |\lambda| \delta\, M \leq \rho \delta\,M.
$$
In other words, for large values of $M$, one can find two vectors $\bs x$ and
$\bsxalt$ satisfying $\| \cl Q_\delta(\bs \Phi \bs x + \bs
\xi)-\cl Q_\delta(\bs \Phi \bsxalt + \bs \xi)\|_1\leq \rho \delta M$ with $\rho \propto |\lambda|$,
while $\|\bs x - \bsxalt\| = |\lambda|\delta \gtrsim
\rho \delta$. 

A similar explanation could simply use the quasi-isometric property
of the embedding $\bs x \to \cl Q_\delta(\bs \Phi \bs x + \bs
\xi)$ studied in the Quantized Johnson-Lindenstrauss lemma of
\cite[Prop. 14]{jacques2013quantized}, to show that, on the same vector pair $(\bs
x,\bsxalt)$ and with high probability, $\|\bs x - \bsxalt\|
\geq c \rho \delta$ if $\| \cl Q_\delta(\bs \Phi \bs x + \bs
\xi)-\cl Q_\delta(\bs \Phi \bsxalt + \bs \xi)\|_1 \geq c'\delta\rho\,
M$, for two universal constants $c,c'>0$.
\end{remark}

\section{Discussion}
\label{sec:prior-works}

Recently, Powell and Whitehouse in~\cite{powell_consistent} have analyzed a model
equivalent to~\eqref{eq:qcs-unif-dith-model} by adopting a geometric standpoint. In
particular, adapting their work to our notations, they have studied
the sensing model
$$
\bs q = \bs A \bs x + \bs n,
$$  
where~$\bs A = (\bs a^T_1, \cdots, \bs a^T_M)^T\in \bb R^{M \times N}$
is a frame whose elements~$\bs a_j$ are drawn from a suitable
distribution on
$\bb S^{N-1}$ and the uniform noise~$\bs n \sim \cl U^{M}([-\delta, \delta])$
stands for, \eg a dithered uniform scalar quantization of~$\bs A \bs x$.
They observe that the consistent reconstruction polytope, which has at
most~$2M$ faces of dimension~$N-1$, 
$$
Q_M := \{\bs u \in \bb R^N: \|\bs A \bs u - \bs q\|_\infty \leq \delta\}
$$
can be seen a translation of an \emph{error polytope}~$P_M$, \ie for any
consistent reconstruction~$\bsxalt \in Q_M$
$$
(\bsxalt - \bs x)\ \in\ P_M\ :=\ \{\bs u \in \bb R^N: \|\bs A \bs u - \bs n\|_{\infty}
\leq \delta\}.
$$
Therefore, for a given~$\bs A$, analyzing the worst-case error of any
consistent reconstruction amounts to estimating the width of~$P_M$,
\ie
\begin{align*}
W_M&= \max\{\|\bs u\|:\bs u \in P_M\}\\
&= \max\{\|\bs u - \bs x\|: \bs u
\in Q_M\}.  
\end{align*}
Authors in~\cite{powell_consistent} estimate the expected worst-case square error~$\bb E |W_M|^2$ with
respect to the distribution of the random vectors~$\{\bs a_j: 1\leq j
\leq M\}$ on~$\bb S^{N-1}$. Relating this estimation to coverage
processes on the unit sphere~\cite{burgisser2010coverage}, they show that, under general assumption on the
distribution of these unit frame vectors, 
$$
(\bb E |W_M|^2)^{\inv{2}} \leq \tfrac{C\delta}{M},
$$
with~$C>0$ depending on this distribution. In particular, for~$M$ frame
vectors uniformly drawn at random over~$\bb S^{N-1}$,~$C = O(N^{3/2})$ so that 
\begin{equation}
  \label{eq:powell-bound}
  (\bb E |W_M|^2)^{\inv{2}} = O(\tfrac{N^{3/2} \delta}{M}).  
\end{equation}

Despite a slightly different context where the results above focus on
an expected worst-case analysis, the behavior of these bounds is
highly similar to the one we get in Corollary~\ref{cor:bound-distance}
for consistent reconstruction of signals in the case of GRFCQ: we observe
that, for one draw of this ($M/N$)-redundant GRF and of the
quantization dithering,~$\cl E_\delta = O(\tfrac{N}{M}(\log\tfrac{M}{\sqrt N}+\log\tinv{2\eta}))$ with
probability higher than~$1-\eta$. 

At first sight the dependence in~$N^{3/2}$ of~\eqref{eq:powell-bound}
may seem less optimal than the dependence in~$N$ of
\eqref{eq:fcq-bound}. However, the first bound is adjusted to random
frame vectors drawn uniformly at random over~$\bb S^{N-1}$~\cite{powell_consistent}, \ie they
all have a unit norm while the GRF vectors have expected
length equal to~$\sqrt{N}$. Keeping in mind the difficulty to
compare a bound on the expectation of a random event with a
probabilistic bound on this event itself, we can notice, however, that rescaling the result of
\cite{powell_consistent} to uniform random frames over the dilated
sphere~$\sqrt{N}\,\bb
S^{N-1}$, or conversely rescaling~$\delta$ into~$\delta/\sqrt{N}$
in~\eqref{eq:powell-bound}, provides an error decay in~$(\bb E |W_M|^2)^{\inv{2}} = O(\tfrac{N\delta}{M})$.   

The reader can notice that the decay in~$(\log M)/M$ of our bound~\eqref{eq:fcq-bound}
with respect to~$M$ suffers from an extra log factor compared to the decay of~$(\bb
E |W_M|^2)^{1/2}$ in~\eqref{eq:powell-bound}. Actually, the
  same observation can be made with respect to known bounds obtained in
  the prior works summarized in
  the Introduction. These former studies have indeed focused on
  characterizing upper bounds on the \emph{mean} square error (MSE) of
  (almost) consistent signal estimation when frame coefficients are
  quantized or, equivalently, when they are corrupted by uniform
noise. In their settings, the signal is
assumed \emph{fixed}, the corrupting quantization
noise is random, and the frame construction is either
random~\cite{rangan_rec_algo_2001,powell_RGalgo} or deterministic~\cite{powell_RGalgo}.
They evaluated the MSE of (almost) consistent estimates over the sources of
randomness and basically proved them to be bounded as~$\bb E\|\bs x
-\bsxalt\| = O(N/M)$. This was also sustained by
empirical evidences in
\cite{Thao_Consistency_1994,goyal_1998_lowerbound_qc}, and aligned
with lower bounds in~$\Omega(N/M)$ on (Bayesian) MSE generally evaluated on
random signal construction~\cite{goyal_1998_lowerbound_qc,rangan_rec_algo_2001}.
The only exception to this averaged setting comes from~\cite{goyal_1998_lowerbound_qc} which, in the particular case of
a tight frame formed by an oversampled Discrete Fourier Transform (DFT), proves that a consistent
estimate of~$\bs x$ reached a \emph{squared error} bounded as~$\|\bs x
- \bsxalt\|^2 = O(N^2/M^2)$ under a mild assumption on~$\bs x$. They
also conjectured that the MSE for any~$(M/N)$-redundant frame should
decay as~$O(N^2/M^2)$.

As will become clear in Sec.~\ref{sec:proofs}, the source of the
extra log factor in~\eqref{eq:fcq-bound} (and similarly in
\eqref{eq:worst-case-error-sparse-case} for QCS of sparse signals) is
due to our implicit worst-case error analysis of (almost) consistent signal estimations. By construction, this means that, with high probability on the draw of the
matrix~$\bs \Phi$ and on the dithering, our results are valid
uniformly for all vectors of the bounded set~$\cl K$, \ie for~$\cl K=\bb
B^N$ in the case of GRFCQ or for~$\cl K=\Sigma_K\cap\bb B^N$ in the case of QCS. Practically, these log factors are induced by the
use of union
bound arguments in the proofs of
Theorems~\ref{thm:consist-impose-closeness} and
\ref{thm:consist-impose-closeness-saprse-case} for upper bounding the probability of failure of our
error bounds over all elements of a covering set of~$\cl K$
(see Sec.~\ref{sec:proofs}).

There exist also other works in QCS interested in asymptotic
  regimes where the three dimensions~$(K,M,N)$ become arbitrary large,
  \eg keeping~$M/K$ and/or~$N/M$ constant, and where the signal is generated randomly from a continuous
  distribution (see \eg~\cite{goyal_2008_cs_lossycomp}). At first
  sight, such
  approaches seem incompatible with the boundedness of the signal
  domain~$\cl K$ assumed in this work, \eg with~$\cl K \subset \bb B^N$. Indeed, as in~\cite{goyal_2008_cs_lossycomp,kamilov_2012}, 
  if the input signal~$\bs x$ is random with i.i.d. entries distributed as a
  distribution~$p_x$, one that ensures the signal to be
  (approximately)~$K$-sparse (\eg with a Gauss-Bernoulli distribution), the expected signal norm~$\bb E\|\bs
  x\|$ is not bounded and grows like~$\sqrt K$. In such a context, a Gaussian
  random sensing matrix must be generated as~$\bs\Phi \sim \cl N^{M\times N}(0,
  1/K)$ in order to keep a constant variance for the components of
 ~$\bs\Phi \bs x$~\cite{goyal_2008_cs_lossycomp}, and hence
  maintaining a constant resolution for the quantizer
  whatever the configuration of~$(K,M,N)$. Provided that~$M$ grows
  like~$O(K \log N/K)$ so that the matrix~$\sqrt{K/M}\,\bs \Phi$ is RIP with
  high probability, one can estimate any signal from the QCS model~\eqref{eq:qcs-unif-dith-model} using, \eg the BPDN
  program~\cite{Chen98atomic,candes2006ssr}. Then, considering the scaling of the sensing matrix
  entries\footnote{If~$\tinv{\mu} \bs \Phi$ is RIP of order~$2K$ and
    constant~$\delta < \sqrt
    2 - 1$, then, the BPDN reconstruction
  error obtained on the noisy sensing model~$\bs y = \bs \Phi \bs x + \bs n$ with~$\|\bs n\| \leq \epsilon$ and~$\bs x$ sparse is bounded by
 ~$O(\epsilon/\mu)$~\cite{Jacques2010}.}, the signal reconstruction
  error~$\|\bs x - \bsxalt\|$ obtained from BPDN is bounded by~$O(\sqrt
  K\,\delta)$, so that~$\|\bs x - \bsxalt\|/\|\bs x\| =
  O(\delta)$. This illustrates again the limit of reconstruction methods
  that do not promote consistency as this error bound is not
  decaying when~$M$ increases. Actually, a recent consistent reconstruction
  method for QCS based on a message passing algorithm~\cite{kamilov_2012} observes
  empirically that~$\|\bs x - \bsxalt\|$ decays as~$1/M$.

  The interested reader will easily check that the present work
  can be adapted to such unbounded signal sensing. Indeed, in the QCS model
 ~\eqref{eq:qcs-unif-dith-model}, we can always apply the variable
  changes~$\bs \Phi = \sqrt K\,\bs \Phi'$ and~$\bs x = \bs x'/\sqrt K$
  with~$\bs \Phi \sim \cl N^{M \times N}(0,1)$, so that~$\bs \Phi
  \bs x = \bs \Phi' \bs x'$. Moreover,~$\bs x \in \cl K =
  \Sigma_K \cap \bb B^N$ involves that~$\bs x' \in \cl K' := \sqrt K\,\cl
  K = \Sigma_K \cap \sqrt K\,\bb B^N$. Therefore, as we basically show in
  Sec.~\ref{sec:main-results} that, with high probability, any consistent and sparse signal estimate
 ~$\bsxalt$ of~$\bs x$ respects~$\|\bs x - \bsxalt\|= O(\frac{K}{M}\log
\frac{MN}{K^{3/2}})$, this establishes that, using the
sensing/signal domain combination~$(\bs\Phi',\cl K')$, a sparse and consistent
estimate~$\tilde{\bs x}$ of~$\bs x' \in \cl K'$ necessarily satisfies~${\|\bs x'
  - \tilde{\bs x}\|}/{\|\bs x'\|} = O(\frac{K}{M}\log
\frac{MN}{K^{3/2}})$ under the same conditions.  Up to the extra log
factor already discussed above, this meets the most recent empirical
observations made in~\cite{kamilov_2012}. Moreover, we observe quickly
that, conversely to BPDN and to equivalent reconstruction approaches, the reconstruction error of
consistent signal estimation vanishes if~$M > K^{1+c}$ and~$M > d N$
for any~$c,d>0$ while~$K$,~$M$ and~$N$
tend all to infinity.

To conclude this section, as pointed out by the known lower bounds
described in the Introduction, let us mention that regular scalar
quantization provides a rather limited decay of the reconstruction
error, both for FCQ and
QCS contexts.  Recent developments in vector quantization for FCQ
\cite{vivekQuantFrame}, in the use of feedback quantization and of
the~$\Sigma\Delta$ scheme for FCQ~\cite{KSW12} and QCS
\cite{gunturk2013sobolev,BoufChapter}, and finally non-regular
quantization schemes where~$\cl Q$ is periodic over its range~\cite{B_TIT_12,BR_DCC13,pai_nonadapt_MIT06,kamilov_2012}, provide all faster reconstruction error
bounds decaying polynomially or even exponentially in~$M$. The
implicit objective of this paper is therefore to improve our
understanding of one of the simplest quantization schemes, that is basically a
dithered round-off operation and its combination with Gaussian random
projections.   

\section{Proofs}
\label{sec:proofs}
\subsection{Quantization of Gaussian Random Frame Coefficients}
\label{sec:main-result}

This section is dedicated to proving
Theorem~\ref{thm:consist-impose-closeness} and the GRFCQ part of
Corollary~\ref{cor:bound-distance}.  Following an argument developed in~\cite{B_TIT_12} for
non-regular scalar quantization, proving that
\begin{equation}
  \label{eq:bounds-max-error-GRF}
  \cl E_\delta(\bs \Phi, \bs \xi, \bb B^N) = \max_{\bs x \in \bb B^N}
  \max_{\bsxalt \in \cl C_{\bs x}}\, \|\bs x -
  \bsxalt\|\quad \leq\quad \epsilon_0  
\end{equation}
holds with probability exceeding~$1-\eta$ on the draw of a GRF~$\bs\Phi = (\bs \varphi_1,\,\cdots,\bs
\varphi_M)^T \sim \cl N^{M \times N}(0,1)$ and of a \emph{dithering}~$\bs
\xi \sim \cl U^{M}([0,\delta])$, amounts to showing that
\begin{equation*}
\bb P[\forall \bs x,\bsxalt \in \bb B^N, \cl Q_\delta[\bs \Phi \bs x + \bs \xi] = \cl Q_\delta[\bs \Phi \bsxalt
+ \bs \xi]\ \Rightarrow\ \|\bs x - \bsxalt\| \leq \epsilon_0\big] \geq
1 - \eta,
\end{equation*}
where~$\bb P$ is computed with respect to the random quantities~$\bs
\Phi$ and~$\bs \xi$.  

Taking the contraposition, we can alternatively demonstrate that, 
\begin{equation*}
P_{\rm fail} := \bb P\big[\exists\, \bs x, \bsxalt \in \bb B^N, \|\bs x - \bsxalt\| \geq
\epsilon_0\ \text{s.t.}\ Q_\delta[\bs \Phi \bs x + \bs \xi] = Q_\delta[\bs \Phi \bsxalt + \bs \xi]\big] \leq \eta.  
\end{equation*}

For upper bounding~$P_{\rm fail}$, we take an~$s$-\emph{covering} of the
unit ball~$\bb B^{N}$, \ie a finite point set~$\cl L_s$ such that for any~$\bs v\in \bb B^N$, there exists a point
$\bar{\bs v} \in \cl L_s$ with distance at most~$s$ from
$\bs v$, \ie $\|\bs v - \bar{\bs v}\| \leq s$. The cardinality~$L_s=\#\cl L_s$ of this
covering set is known to be bounded as~$L_s \leq (3/s)^N$~\cite{baraniuk2008simple}.  

Therefore, if~$\bs x,\bsxalt \in \bb B^{N}$ are such that~$\|\bs x - \bsxalt\| \geq \epsilon_0$, taking their respective closest points
$\bar{\bs x}, \bar{\bs x}^* \in \cl L_s$, we have
$
\|\bar{\bs x} -\bar{\bs x}^*\| \geq \epsilon_0 - 2s
$. 
Consequently, it follows that 
\begin{multline*}
P_{\rm fail}\leq\ \bb P\big[(\exists \bar{\bs p}, \bar{\bs q}\in \cl
L_{s}: \|\bar{\bs p}-\bar{\bs q}\| \geq
\epsilon_0 - 2s),\\ \exists \bs u \in \bb B_{s}(\bar{\bs p}), \exists \bs v \in
\bb B_{s}(\bar{\bs p}):\ \cl Q_\delta[\bs \Phi \bs u + \bs \xi] =\cl Q_\delta[\bs \Phi \bs v + \bs \xi]\,\big].
\end{multline*}
Indeed, if the event whose probability is measured by~$P_{\rm fail}$ is
verified for~$\bs x$ and~$\bsxalt$, taking~$\bar{\bs p}=\bar{\bs x}$,~$\bar{\bs q} = \bar{\bs x}^*$,
$\bs u=\bs x$ and~$\bs
v=\bsxalt$ shows that the event associated to the probability of the RHS
above occurs.

Thus, if one can find an upper bound~$P_0$ on 
\begin{equation*}
\bb P\big[\exists \bs u \in \bb B_{s}(\bar{\bs p}), \exists \bs v \in
\bb B_{s}(\bar{\bs p}),\ \cl Q_\delta[\bs \Phi \bs u + \bs \xi] =\cl
Q_\delta[\bs \Phi \bs v + \bs \xi]
\big|\ \|\bar{\bs p}-\bar{\bs q}\| \geq
\epsilon_0 - 2s\big] \leq P_0  
\end{equation*}
that is independent of~$\bar{\bs p}$ and~$\bar{\bs q}$, since the
number of possible pairs of points in~$\cl
L_{s}$ is bounded by~${L_s \choose 2} < \tinv{2} L^2_{s}$ independently of any conditions on them, a union
bound provides
$$
P_{\rm fail} \leq \tinv{2} L^2_{s} P_0.
$$
The following key lemma allows one to estimate~$P_0$.
\begin{lemma} 
\label{lemma:main-proba-bound}
Let~$\tilde{\bs p}, \tilde{\bs
  q}$ be two points in~$\bb R^N$. There exists a radius~$s' \geq \tfrac{1}{8\sqrt N} \|\tilde{\bs p}-\tilde{\bs
  q}\|$ such that, for~$\bs \Phi \sim \cl N^{N \times
  M}(0,1)$ and~$\bs \xi \sim \cl U^{M\times 1}([0,\delta])$, the probability  
\begin{equation*}
P_{s'}(\alpha, M)\ :=\ \bb P\big[\exists \bs u \in \bb B_{s'}(\tilde{\bs p}), \exists \bs v \in
\bb B_{s'}(\tilde{\bs q}),\ \cl Q_\delta[\bs \Phi \bs u + \bs \xi] =\cl
Q_\delta[\bs \Phi \bs v + \bs \xi]\big]  
\end{equation*}
satisfies
\begin{equation}
  \label{eq:bound-P-sp}
  P_{s'}(\alpha, M)\ \leq\ \big(1 - \tfrac{3\alpha}{8\ +\ 4\alpha}\big)^M,  
\end{equation}
with~$\alpha = {\|\tilde{\bs p}-\tilde{\bs q}\|}/{\delta}$.
\end{lemma}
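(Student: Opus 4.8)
\medskip

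The plan is to reduce the event controlled by $P_{s'}$ to a statement about each coordinate (each row $\bs\varphi_j$ of $\bs\Phi$) separately, exploit independence across rows, and bound the per-row probability by a Buffon's-needle type computation. First I would choose $s'$ carefully. The key geometric idea is that for two balls $\bb B_{s'}(\tilde{\bs p})$, $\bb B_{s'}(\tilde{\bs q})$ to contain a pair of points $\bs u,\bs v$ with identical quantized projections, one needs, for \emph{every} row index $j\in[M]$, that the intervals $\bs\varphi_j^T\bb B_{s'}(\tilde{\bs p}) + \xi_j$ and $\bs\varphi_j^T\bb B_{s'}(\tilde{\bs q}) + \xi_j$ land in a common quantization cell of $\cl Q_\delta$, or at least that one can pick $u,v$ making $\cl Q_\delta$ agree on that coordinate. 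Since $\bs\varphi_j^T\bb B_{s'}(\tilde{\bs p})$ is an interval of half-width $s'\|\bs\varphi_j\|$ centered at $\bs\varphi_j^T\tilde{\bs p}$, the relevant per-coordinate event is roughly that $|\bs\varphi_j^T(\tilde{\bs p}-\tilde{\bs q})|$, inflated by $2s'\|\bs\varphi_j\|$, does not force a quantization-boundary crossing for \emph{some} choice of $u\in\bb B_{s'}(\tilde p)$, $v\in\bb B_{s'}(\tilde q)$. I would set $s' = \tfrac{1}{8\sqrt N}\|\tilde{\bs p}-\tilde{\bs q}\|$ (or slightly larger), so that the inflation $s'\|\bs\varphi_j\|$ is, with the typical $\|\bs\varphi_j\|\approx\sqrt N$, comparable to a fixed fraction of $\|\tilde{\bs p}-\tilde{\bs q}\|$; this is exactly what makes the $\alpha/(8+4\alpha)$ shape appear (the $8$ and $4$ track the inflation constant $s'\sqrt N = \tfrac18\|\tilde p-\tilde q\|$ and a factor $2$ from having two balls). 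The delicate point is that $\|\bs\varphi_j\|$ is itself random, so one cannot literally bound it by $\sqrt N$; instead I expect the argument conditions on the direction of $\bs\varphi_j$ and integrates, or replaces the hard radius bound by the statement ``there \emph{exists} $s'\ge\tfrac1{8\sqrt N}\|\tilde p-\tilde q\|$'' — i.e. $s'$ is allowed to depend on the realization in a mild way — which is how the lemma is phrased.

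\medskip

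Next I would set up the independence. Write $\bs\delta := \tilde{\bs p}-\tilde{\bs q}$, $\alpha=\|\bs\delta\|/\delta$. Conditionally on $\bs\xi$, the rows $\bs\varphi_j$ are i.i.d.\ $\cl N(0,\Id_N)$, so the events $\mathcal A_j := \{\exists u_j\in\bb B_{s'}(\tilde p),\,v_j\in\bb B_{s'}(\tilde q):\ \cl Q_\delta(\bs\varphi_j^T u_j+\xi_j)=\cl Q_\delta(\bs\varphi_j^T v_j+\xi_j)\}$ are independent across $j$, and the target event is contained in $\bigcap_j \mathcal A_j$. Hence $P_{s'}(\alpha,M)\le \prod_{j=1}^M \Prob[\mathcal A_j] = \Prob[\mathcal A_1]^M$, and it remains to show $\Prob[\mathcal A_1]\le 1-\tfrac{3\alpha}{8+4\alpha}$. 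Here I would fix one row $\bs\varphi\sim\cl N(0,\Id_N)$ and $\xi\sim\cl U([0,\delta])$, decompose $\bs\varphi = \tfrac{g}{\|\bs\delta\|}\bs\delta + \bs\varphi^\perp$ where $g=\bs\varphi^T\bs\delta/\|\bs\delta\|\sim\cl N(0,1)$ is the component along $\bs\delta$, and observe: the two image intervals are $I_p = [\bs\varphi^T\tilde p - s'\|\bs\varphi\|,\ \bs\varphi^T\tilde p + s'\|\bs\varphi\|]+\xi$ and similarly $I_q$ centered at $\bs\varphi^T\tilde q+\xi$; their centers differ by $\bs\varphi^T\bs\delta = g\|\bs\delta\|$. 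The event $\mathcal A_1$ (that $\cl Q_\delta$ can be made to agree) fails \emph{with certainty} whenever the union $I_p\cup I_q$ overshoots a full cell, i.e.\ when the gap between the far ends, $|g|\,\|\bs\delta\| + 2 s'\|\bs\varphi\|$, exceeds $\delta$ AND, crucially, a quantization boundary actually lies strictly between the two intervals — this last part is where the dithering $\xi$ enters and contributes the ``Buffon'' probability: given $g$ and $\|\bs\varphi\|$, the probability over $\xi\sim\cl U([0,\delta])$ that $\cl Q_\delta$ is forced to disagree is exactly $\min(1, (|g|\,\|\bs\delta\| + 2s'\|\bs\varphi\| - (\text{overlap slack}))/\delta)$ up to the usual Buffon computation $\tfrac{2}{\pi}$-type constants — this is the content cited from \cite{jacques2013quantized}.

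\medskip

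So the core computation is: $\Prob[\mathcal A_1] = 1 - \E\big[\min(1,\ \tfrac{|g|\|\bs\delta\|}{\delta} + \tfrac{2s'\|\bs\varphi\|}{\delta})\big]$ roughly, where the expectation is over $(g,\bs\varphi)$; and then lower-bounding the subtracted term. Using $s'\|\bs\varphi\|\le$ (a controlled multiple of) $\sqrt N s' = \tfrac18\|\bs\delta\|$ on a good event for $\|\bs\varphi\|$, and $\E\min(1,|g|\alpha + \tfrac{\alpha}{4})\ge$ some explicit lower bound, I expect to land on a bound of the form $\Prob[\mathcal A_1]\le 1 - c\alpha/(1+\tfrac12\alpha)$ with $c$ shaken out to be $3/8$ after tracking constants — giving $1-\tfrac{3\alpha}{8+4\alpha}$. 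The main obstacle, and the part I would spend the most care on, is twofold: (i) handling the randomness of $\|\bs\varphi\|$ so that the ``$+2s'\|\bs\varphi\|$'' inflation does not ruin the per-row bound — this is presumably why the lemma only asserts \emph{existence} of a suitable $s'\ge\tfrac1{8\sqrt N}\|\tilde p-\tilde q\|$ rather than fixing it, allowing one to absorb the fluctuation of $\|\bs\varphi\|$ into the choice; and (ii) correctly accounting for the ``$\exists u,v$'' quantifier — the two balls give \emph{leeway} in both directions, so the per-coordinate failure of consistency is not simply ``intervals cross a boundary'' but ``every choice of $u,v$ in the two balls crosses a boundary'', which is the stronger (smaller-probability) event and is exactly what makes a nontrivial bound possible. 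I would isolate (i)–(ii) as the technical heart, consistent with the paper deferring this lemma's proof to the appendix.
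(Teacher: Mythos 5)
Your high-level skeleton does coincide with the paper's: tensorize over the $M$ independent rows ($P_{s'}(\alpha,M)\le P_{s'}(\alpha,1)^M$), and for a single row reduce the event to a Buffon-with-dithering question — the projections of the two balls are intervals, consistency is forced to fail exactly when a dithered grid point of step $\delta$ separates the two intervals, and the uniform dither makes that probability equal to the gap length over $\delta$, capped at $1$. But your ``core computation'' is written with the wrong sign: the half-widths $s'\|\bs\varphi\|$ \emph{shrink} the separating gap, so the conditional forced-inconsistency probability is $\min\big(1,\tinv{\delta}\,(|g|\,\|\tilde{\bs p}-\tilde{\bs q}\|-2s'\|\bs\varphi\|)_+\big)$, not $\min\big(1,\tfrac{|g|\,\|\tilde{\bs p}-\tilde{\bs q}\|}{\delta}+\tfrac{2s'\|\bs\varphi\|}{\delta}\big)$. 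With your sign the quantity you then propose to lower bound overstates the failure probability, so it does not yield a valid upper bound on $\Prob[\mathcal A_1]$; the $+$/$-$ distinction is precisely the difficulty created by the ``$\exists\,\bs u,\bs v$'' quantifier that you correctly flag in words but lose in the formula. (Also note $g$ and $\|\bs\varphi\|$ are not independent, which your sketch glosses over.)

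The more serious gap is the treatment of the random magnitude $\|\bs\varphi\|$ and the constants. Your suggestion that the ``there exists $s'$'' phrasing permits $s'$ to depend on the realization is a misreading: $s'$ must be deterministic (in the paper it is fixed through $2s'/\|\tilde{\bs p}-\tilde{\bs q}\|=\lambda/(2\kappa_N)$ with $\kappa_N=\Theta(\sqrt N)$ and $\lambda=1-\sqrt{2/\pi}$, and only afterwards lower-bounded by $\tfrac{1}{4\sqrt N}$); a realization-dependent radius would neither satisfy the lemma as stated nor serve the downstream covering argument, where the radius must dominate a covering radius chosen before $\bs\Phi$ is drawn. The paper instead conditions on $\phi=\|\bs\varphi\|\sim\chi(N)$, computes the conditional dumbbell probability by an angular integral, and removes the conditioning with two Jensen steps — one in the direction variable (after an integration by parts exhibiting a convex antiderivative $F$ integrated against a unit-mass measure), one in $\phi$ (concavity of $t\varphi(t)$ after absorbing a factor $\phi$ into a $\chi(N-1)$-type density) — before optimizing $\lambda$; these devices are what deliver \emph{simultaneously} $s'\ge\tfrac{1}{8\sqrt N}\|\tilde{\bs p}-\tilde{\bs q}\|$ and the constant $\tfrac{3\alpha}{8+4\alpha}$ (via $\sqrt{2/\pi}>3/4$). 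Your proposed good-event truncation of $\|\bs\varphi\|$ discards probability mass exactly where the margins are thin (the paper's radius exceeds the advertised $\|\tilde{\bs p}-\tilde{\bs q}\|/(8\sqrt N)$ only barely), and you do not carry out the verification that the constants ``shake out''; since that constant tracking uniformly in $\alpha$ is the actual content of the lemma, the proof as sketched is incomplete.
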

\begin{proof}
See Appendix~\ref{sec:proof-lemma}.  
\end{proof}

As explained in its proof (see
Appendix~\ref{sec:proof-lemma}), this lemma is determined by an equivalence with Buffon's Needle problem
in~$N$ dimensions~\cite{hey2010georges}, where the needle is actually replaced by a ``dumbbell'' shape
whose two balls are associated to the two neighborhoods of~$\tilde{\bs
  p}$ and~$\tilde{\bs q}$. 

The quantity~$P_{\lambda}(\alpha, M)$ defined in Lemma~\ref{lemma:main-proba-bound} increases with
$\lambda>0$. Therefore, for finding an estimate of~$P_0$ which is
associated to the covering radius~$s$, we must guarantee that~$s \leq s'$, knowing that~$\|\bar{\bs p}-\bar{\bs q}\| \geq \epsilon_0 - 2s$ and~$2s' \geq \tinv{4\sqrt N} \|\bar{\bs p} - \bar{\bs q}\|$.  This is achieved by imposing~$\tinv{4\sqrt N} (\epsilon_0 - 2s) = 2s$, \ie 
$$
2s = \tfrac{\epsilon_0}{4\sqrt N + 1}.  
$$
This provides also~$\epsilon_0 - 2s = \tfrac{4\sqrt N}{4\sqrt N +
  1}\,\epsilon_0 > \tfrac{4}{5}\epsilon_0$ if~$N\geq 2$.

Consequently, using~\eqref{eq:bound-P-sp} and observing that~$1 -
\,3\alpha/(4 + 8\alpha)$ decays with~$\alpha$, 
\begin{align*}
&\bb P\big[\exists \bs u \in \bb B_{s}(\bar{\bs p}), \exists \bs v \in
\bb B_{s}(\bar{\bs p}),\\
&\qquad\cl Q_\delta[\bs \Phi \bs u + \bs \xi] =\cl
Q_\delta[\bs \Phi \bs v + \bs \xi]\,\big|\ \|\bar{\bs p}-\bar{\bs q}\|
\geq
\epsilon_0 - 2s\big]\\
&= P_{s}\big(\tinv{\delta}\|\bar{\bs p}-\bar{\bs q}\|, M\big)\\
&\leq P_{s'}\big(\tinv{\delta}\|\bar{\bs p}-\bar{\bs q}\|, M\big)\\
&\leq \big(1 - \tfrac{3\frac{4}{5}\epsilon_0}{8\delta\ +\
  4\frac{4}{5}\epsilon_0}\big)^M\ <\ \big(1 -
  \tfrac{2\epsilon_0}{8\delta\ +\ 4\epsilon_0}\big)^M\\
&\leq\ 
\exp(-\tfrac{M\epsilon_0}{4\delta\ +\ 2 \epsilon_0}).   
\end{align*}
We can then set~$P_0 =
\exp(-\tfrac{M\epsilon_0}{4\delta\ +\ 2 \epsilon_0})$ so that finally
\begin{align}
&P_{\rm fail}\nonumber\\
&= \bb P\big[\cl Q_\delta[\bs \Phi \bs x + \bs \xi] = Q_\delta[\bs \Phi \bsxalt + \bs \xi]\,\big|\
\|\bs x - \bsxalt\| \geq \epsilon_0\big]\nonumber\\
&\leq \tinv{2}(\tfrac{3}{s})^N \exp(-\tfrac{M\epsilon_0}{4\delta\ +\ 2 \epsilon_0}) \nonumber\\
&= \tinv{2} \exp(N \log(\tfrac{24\sqrt N + 6}{\epsilon_0})
  -\tfrac{M\epsilon_0}{4\delta\ +\ 2 \epsilon_0}) \nonumber\\
\label{eq:pfail-grfcq}
&\leq \tinv{2} \exp(N \log(\tfrac{29 \sqrt{N}}{\epsilon_0}) -\tfrac{M\epsilon_0}{4\delta\ +\ 2 \epsilon_0}).
\end{align} 

Therefore, if we want~$P_{\rm fail} \leq \eta$ for some~$0<\eta<1$, it
suffices to impose 
$$
M  \geq \tfrac {4\delta\ +\
  2 \epsilon_0}{\epsilon_0}\,\big(N \log(\tfrac{29 \sqrt{N}}{\epsilon_0}) + \log \tfrac{1}{2\eta}\big), 
$$
which determines the condition invoked in Theorem~\ref{thm:consist-impose-closeness}.

Knowing that we have necessarily~$\epsilon_0\leq 2$ since
$\bs x,\bsxalt\in \bb B^{N}$, a stronger condition for
\eqref{eq:bounds-max-error-GRF} to occur with the same lower bound
on its probability reads 
\begin{equation}
  \label{eq:cond-M}
M  \geq \tfrac {4(\delta\ +\
  1)}{\epsilon_0}\,\big(N \log(\tfrac{29 \sqrt{N}}{\epsilon_0}) + \log \tfrac{1}{2\eta}\big). 
\end{equation}
Alternatively, saturating this condition, we have 
\begin{equation*}
\epsilon_0 = \tfrac {4(\delta\ +\
  1)}{M}\,\big(N \log(\tfrac{29\sqrt{N}}{\epsilon_0}) + \log
\tfrac{1}{2\eta}\big)\ 
\leq\  \tfrac {4(\delta\ +\
  1)}{M}\,\big(N \log(\tfrac{5M}{2\sqrt N}) + \log
\tfrac{1}{2\eta}\big).  
\end{equation*}
where we used the fact that, from~\eqref{eq:cond-M}, 
$$
\tfrac{M}{\sqrt N} \geq \tfrac{4}{\epsilon_0} \sqrt N \log
(\tfrac{29 \sqrt N}{\epsilon_0}) \geq \tfrac{4}{\epsilon_0}  
\log (\tfrac{29}{\sqrt 2}) \sqrt N
\geq \tfrac{2}{5} \tfrac{29}{\epsilon_0} \sqrt N,
$$ 
since~$\epsilon_0 \leq
2$ and assuming~$N \geq 2$. 

In other words, assuming~$\delta = O(1)$, there exists a constant
$C>0$ such that, 
$$
\bb P\big[ \cl E_\delta(\bs \Phi, \bs \xi, \bb B^N)\ \leq\ C \big( \tfrac {N}{M}\,\log(\tfrac{M}{\sqrt
  N}) + \tfrac {1}{M} \log
\tfrac{1}{2\eta}\big) \big] \geq 1 - \eta,
$$
which proves \eqref{eq:worst-case-error-GRFCQ} in Corollary~\ref{cor:bound-distance}.

\subsection{Quantized Compressed Sensing of Sparse Vectors}
\label{sec:extension-k-sparse}

We prove now Theorem~\ref{thm:consist-impose-closeness-saprse-case}
(and the QCS part of Corollary~\ref{cor:bound-distance}), \ie we adapt the minimal number of
measurements in the statement of
Theorem~\ref{thm:consist-impose-closeness} to the context of QCS when both the original
signal and the consistent reconstruction are additionally assumed to be~$K$-sparse in~$\bb B^N \subset \bb
R^N$, \ie they belong to~$\cl K = \Sigma_K\cap \bb B^N$ with~$\Sigma_K
:= \{\bs w \in \bb R^N: \|\bs w\|_0 \leq K\}$.

Notice first that, given a fixed support~$T_0 \subset [N]$ with~$\#T_0 =
2K$, thanks to the developments of Sec.~\ref{sec:prior-works}, 
\begin{multline*}
\bb  P\big[(\exists\bs x,\bsxalt \in \bb B^N: \|\bs x - \bsxalt\| \geq
\epsilon_0,\ \supp \bs x \cup \supp \bsxalt
\subset T_0):\
\cl Q_\delta[\bs \Phi \bs x + \bs \xi] = Q_\delta[\bs \Phi \bsxalt + \bs \xi]\big]\\
\leq \tinv{2} \exp(2K \log(\tfrac{29 \sqrt{2K}}{\epsilon_0})
-\tfrac{M\epsilon_0}{4\delta\ +\ 2 \epsilon_0}),  
\end{multline*}
since the subspace of vectors supported in~$T_0$ is equivalent to~$\bb
R^{2K}$.

Since there are no more than~${N \choose 2K} \leq
(\tfrac{eN}{2K})^{2K}$ choices of~$2K$-length supports in~$[N]$, another
union bound provides 
\begin{align}
&\bb P\big[(\exists\bs x,\bsxalt \in \bb B^N\cap\Sigma_K: \|\bs x -
  \bsxalt\| \geq \epsilon_0):  \cl Q_\delta[\bs \Phi \bs x + \bs \xi] = Q_\delta[\bs \Phi \bsxalt + \bs \xi]\,\big]\nonumber\\
&\leq \bb P\big[(\exists T\subset [N]: \#T = 2K),\ (\exists\bs x,\bsxalt \in \bb B^N: \|\bs x - \bsxalt\| \geq
  \epsilon_0,\ \supp \bs x \cup \supp \bsxalt
\subset T):\nonumber\\
&\pushright{\cl Q_\delta[\bs \Phi \bs x + \bs \xi] = \cl Q_\delta[\bs \Phi \bsxalt + \bs \xi]\,\big]}\nonumber\\
&\leq \tinv{2} {\ts {N \choose 2K}} \exp(2K \log(\tfrac{29 \sqrt{2K}}{\epsilon_0})
-\tfrac{M\epsilon_0}{4\delta\ +\ 2 \epsilon_0}) \nonumber\\
\label{eq:pfail-qcs}
&\leq\ \tinv{2} \exp(2K \log(\tfrac{29 e N}{\sqrt{2K}\epsilon_0})
-\tfrac{M\epsilon_0}{4\delta\ +\ 2 \epsilon_0}).    
\end{align}
Again, willing to have this last probability smaller than~$\eta \in (0,1)$
leads to imposing
$$
M \geq \tfrac {4\delta\ +\ 2 \epsilon_0}{\epsilon_0} \big(2K \log(\tfrac{29 e N}{\sqrt{2K}\epsilon_0}) + \log(\tfrac{1}{2\eta})\big),
$$
which, by noting that~$29e/\sqrt 2 < 56$, provides the key condition of Theorem
\ref{thm:consist-impose-closeness-saprse-case}.

Since~$\epsilon_0 \leq 2$, a stronger condition reads 
$$
M \geq \tfrac {4(\delta\ +\ 1)}{\epsilon_0} \big(2K \log(\tfrac{56 N}{\sqrt{K}\epsilon_0}) + \log(\tfrac{1}{2\eta})\big),
$$
which gives the crude estimation
\begin{equation*}
\tfrac{MN}{\sqrt{K^3}} \geq \tfrac {8N}{\epsilon_0\sqrt{K}}
\log(\tfrac{56 N}{\sqrt{K}\epsilon_0})
> \tfrac {8N}{\epsilon_0\sqrt{K}}
\log(\tfrac{56\sqrt N}{\epsilon_0}) > \tinv{2}\,\tfrac {56N}{\sqrt{K}\epsilon_0},
\end{equation*}
using~$K \leq N$ and~$N \geq 2$. Therefore, saturating the condition
on~$M$ above,
\begin{equation*}
\epsilon_0 = \tfrac {4(\delta\ +\ 1)}{M} \big(2K
\log(\tfrac{56 N}{\sqrt{K}\epsilon_0}) +
\log(\tfrac{1}{2\eta})\big)\ 
\leq  \tfrac {4(\delta\ +\ 1)}{M} \big(2K
\log(\tfrac{2MN}{\sqrt{K^3}}) +
\log(\tfrac{1}{2\eta})\big),  
\end{equation*}
which shows that, if~$\delta = O(1)$, there exists a constant~$C>0$
for which
\begin{equation*}
\bb P\big[ \cl E_\delta(\bs \Phi, \bs \xi, \Sigma_K \cap \bb B^N)\
\leq\ C \big( \tfrac {K}{M} 
\log(\tfrac{MN}{\sqrt{K^3}}) +
\tinv{M}\,\log(\tfrac{1}{2\eta})\big) \big] \geq 1 - \eta.  
\end{equation*}
This demonstrates \eqref{eq:worst-case-error-QCS} in Corollary~\ref{cor:bound-distance}.

\subsection{Proximity of Almost Consistent Signals}
\label{sec:prox-almost-cons}

As stated in the end of Sec.~\ref{sec:main-results}, the strict consistency between the quantized
projections of two vectors of~$\cl K \subset \bb R^N$ can be relaxed
while still keeping their maximal distance bounded. 
To show this, we follow a
similar procedure to that developed in~\cite{QIHT} for the case of 1-bit
quantized random projections. We may first observe that if 
\begin{equation}
  \label{eq:almost-consistency-cond}
  \|\cl Q_\delta(\bs \Phi \bs x + \bs
  \xi)-\cl Q_\delta(\bs \Phi \bsxalt + \bs \xi)\|_1\ \leq\ r\,\delta,  
\end{equation}
for some~$r \in \bb N$, at most~$r$ measurements differ between~$\cl Q_\delta(\bs \Phi \bs x + \bs
\xi)$ and~$\cl Q_\delta(\bs \Phi \bsxalt + \bs \xi)$. Thus, there exists a subset~$T$ of~$[M]$ with size at least~$M-r$ such that~$\cl R_{T} \cl Q_\delta(\bs \Phi \bs x + \bs
\xi) = \cl R_{T} \cl Q_\delta(\bs \Phi \bsxalt + \bs \xi)$, with
the corresponding restriction operator~$\cl R_{T}$ defined in
the Introduction.

Therefore, for~$\cl K \subset \bb R^N$ and denoting with~$[M]_r$ the set of
all subsets of~$[M]$ of size~$M-r$, a union bound provides
\begin{align*}
P_r:=&\ts \ \bb P\big[\exists\,\bs x, \bsxalt \in \cl K:  \|\bs x - \bsxalt\|\geq
\epsilon_0\\
&\qquad \text{s.t.}\ \| \cl Q_\delta(\bs \Phi \bs x + \bs
\xi)-\cl Q_\delta(\bs \Phi \bsxalt + \bs \xi)\|_1 \leq r\,\delta\,\big]\\
\leq&\ts\ \bb P\big[\exists T \subset [M]_r,\ \exists\,\bs x, \bsxalt \in \cl K:  \|\bs x - \bsxalt\|\geq
\epsilon_0\\
&\qquad \text{s.t.}\ \cl R_{T}\cl Q_\delta(\bs \Phi \bs x + \bs
\xi) = \cl R_{T}\cl Q_\delta(\bs \Phi \bsxalt + \bs \xi)\big]\\
\leq&\ts\ \sum_{T \subset [M]_r} \bb P\big[\exists\,\bs x, \bsxalt \in \cl K:  \|\bs x - \bsxalt\|\geq
\epsilon_0\\
&\qquad \text{s.t.}\ \cl Q_\delta(\cl R_{T_{\rm
    c}} \bs \Phi \bs x + \bs
\xi_{T}) = \cl Q_\delta(\cl R_{T} \bs \Phi \bsxalt + \bs
\xi_{T})\big].
\end{align*}

Each element of this last sum can be bounded from our
  developments of Sec.~\ref{sec:main-result} and of Sec.~\ref{sec:extension-k-sparse}.
In the GRFCQ case, \ie if~$\cl K = \bb B^N$ with~$M \geq N$, using~\eqref {eq:pfail-grfcq} and~${M \choose M-r} = {M
  \choose r}\leq (eM/r)^r$, we find
\begin{align*}
P_r&\leq\ \tinv{2}\,{\ts {M \choose M-r}} \exp(N \log(\tfrac{29 \sqrt{N}}{\epsilon_0}) -\tfrac{(M-r)\epsilon_0}{2\delta\ +\ 4 \epsilon_0})\\
&\leq\ \tinv{2} \exp(r \log(\tfrac{eM}{r})\ +\ N \log(\tfrac{29 \sqrt{N}}{\epsilon_0}) -\tfrac{(M-r)\epsilon_0}{4\delta\ +\ 2 \epsilon_0}).
\end{align*}
In the QCS case where~$\cl K = \Sigma_K \cap \bb B^N$, using
\eqref{eq:pfail-qcs} in Sec.~\ref{sec:extension-k-sparse}, we have similarly
\begin{align*}
P_r&\leq\ \tinv{2}\,{\ts {M \choose M-r}} \exp(2K \log(\tfrac{29 e N}{\sqrt{2K}\epsilon_0})
-\tfrac{(M-r)\epsilon_0}{4\delta + 2 \epsilon_0})\\
&\leq\ \tinv{2} \exp(r \log(\tfrac{eM}{r}) + 2K \log(\tfrac{29 e N}{\sqrt{2K}\epsilon_0})
 - \tfrac{(M-r)\epsilon_0}{4\delta + 2 \epsilon_0}).
\end{align*}
Imposing that those bounds on~$P_r$ be smaller than~$\eta \in (0,1)$,
we find that, as soon as  
\begin{equation}
  \label{eq:cond-M-GRFCQ}
  M \geq\ r + \tfrac {4\delta\ +\ 2 \epsilon_0}{\epsilon_0}\big(r
  \log(\tfrac{eM}{r}) + N \log(\tfrac{29 \sqrt{N}}{\epsilon_0}) + \log(\tfrac{1}{2\eta})\big),
\end{equation}
for GRFCQ, or
\begin{equation}
  \label{eq:cond-m-any-consist-support}
  M \geq\ r +
\tfrac {4\delta\ +\ 2\epsilon_0}{\epsilon_0} \big(r
  \log(\tfrac{eM}{r}) +
  2K \log(\tfrac{56 N}{\sqrt{K}\epsilon_0}) + \log(\tfrac{1}{2\eta})\big),  
\end{equation}
for QCS, and given~$\bs \Phi \sim \cl N^{M \times N}(0,1)$ and~$\bs \xi \sim
\cl U^{M}([0,\delta])$, the event
\begin{equation}
  \label{eq:almost-consist-inv-prox}
  \forall \bs x,\bsxalt\in \cl K,\quad \| \cl Q_\delta(\bs \Phi \bs x + \bs
  \xi)-\cl Q_\delta(\bs \Phi \bsxalt + \bs \xi)\|_1 \leq r\,\delta\quad
  \Rightarrow\quad\|\bs x - \bsxalt\|\leq \epsilon_0,  
\end{equation}
holds with probability higher than~$1 - \eta$, with $\cl K$ fixed as
above by the associated case. 

These considerations allow us to prove
  Theorems~\ref{thm:bound-distance-almost-consist} and~\ref{thm:propor-consist}, \ie to bound, respectively, the proximity of vectors
  whose quantized random projections are either ``\emph{almost perfectly
  consistent}'', \ie if~$r=O(1)$ relatively to~$M$, or for which the number of inconsistent projections is
  proportional to~$M$, what we call ``\emph{proportional inconsistency}''.

\subsubsection{Almost perfect consistency}
\label{sec:a-almost-perfect}

In this regime, we
  assume that~$r$ is bounded relatively to the possible increasing of
 ~$M$, \ie $r = O(1)$. In the context of GRFCQ and allowing a stronger condition on~$M$, we can then simplify~\eqref{eq:cond-M-GRFCQ} by a series of crude upper bounds and
  observe that
\begin{align*}
&r + \tfrac {4\delta + 2 \epsilon_0}{\epsilon_0}\big(r
\log(\tfrac{eM}{r}) + N \log(\tfrac{29 \sqrt{N}}{\epsilon_0}) + \log(\tfrac{1}{2\eta})\big)\\
&\leq \tfrac {4\delta + 2\epsilon_0}{\epsilon_0} \big( \tfrac{3}{2} r 
\log(eM) +  \tfrac{3}{2}N \log(\tfrac{29 \sqrt{N}}{\epsilon_0}) +
\log(\tfrac{1}{2\eta})\big)\\
&\leq \tfrac {4(\delta + 1)}{\epsilon_0} \big( \tfrac{3}{2} r 
\log(\tfrac{2eM}{\epsilon_0}) +  \tfrac{3}{2}N \log(\tfrac{29 \sqrt{N}}{\epsilon_0}) +
\log(\tfrac{1}{2\eta})\big)\\
&\leq \tfrac {4(\delta + 1)}{\epsilon_0} \big( \tfrac{3}{2} ( N + r) 
\log(\tfrac{29M}{\epsilon_0}) +
\log(\tfrac{1}{2\eta})\big)\\
&\leq \tfrac {6(\delta + 1)}{\epsilon'_0} N \log(\tfrac{29M}{\epsilon'_0}) +
\tfrac {4(\delta + 1)}{\epsilon'_0}\log(\tfrac{1}{2\eta}),
\end{align*}
using the variable change~$\epsilon_0 = \frac{N+r}{N} \epsilon_0' \geq
\epsilon_0'$,
$\epsilon_0 \leq 2$ and~$M \geq N \geq \sqrt N$. Notice that in the case where~$r=0$, remembering that
the term~$r\log(eM/r)$ above comes from a bound on~$\log {M \choose r}$, we can
assume~$r\log(eM/r) = 0$, and since~$r\in\Nbb$, we can write
$r\log(eM/r) \leq r \log(eM)$.

For the case of QCS, starting from the RHS of
\eqref{eq:cond-m-any-consist-support}, we get
\begin{align*}
&r + \tfrac {4\delta + 2\epsilon_0}{\epsilon_0} \big(r
\log(\tfrac{eM}{r}) +  2K \log(\tfrac{56 N}{\sqrt{K}\epsilon_0}) + \log(\tfrac{1}{2\eta})\big)\\
&\leq \tfrac {4\delta + 2\epsilon_0}{\epsilon_0} \big( 2 r 
\log(eM) +  2K \log(\tfrac{56 N}{\sqrt{K}\epsilon_0}) +
\log(\tfrac{1}{2\eta})\big)\\
&\leq \tfrac {4(\delta + 1)}{\epsilon_0} \big( 2 r 
\log(\tfrac{2eM}{\epsilon_0}) +  2K \log(\tfrac{56 N}{\sqrt{K}\epsilon_0}) +
\log(\tfrac{1}{2\eta})\big)\\
&= \tfrac {4K(\delta + 1)}{(K+r)\epsilon'_0} \big( 2 r 
\log(\tfrac{2eMK}{(K+r)\epsilon'_0}) +  2K \log(\tfrac{56 N \sqrt{K}}{(K+r)\epsilon'_0}) +
\log(\tfrac{1}{2\eta})\big)\\
&\leq \tfrac {4K(\delta + 1)}{(K+r)\epsilon'_0} \big( 2 r 
\log(\tfrac{2eM}{\epsilon'_0}) +  2K \log(\tfrac{56 N}{\epsilon'_0}) +
\log(\tfrac{1}{2\eta})\big)\\
&\leq \tfrac {4K(\delta + 1)}{(K+r)\epsilon'_0} 2 (r + K)
  \log(\tfrac{56 \max(N,M/10)}{\epsilon'_0}) + \tfrac {4(\delta + 1)}{\epsilon'_0} \log(\tfrac{1}{2\eta})\\
&= \tfrac {8(\delta + 1)}{\epsilon'_0}  K \log(\tfrac{56 \max(N,M/10)}{\epsilon'_0}) +
\tfrac {4(\delta + 1)}{\epsilon'_0} \log(\tfrac{1}{2\eta}),
\end{align*}
using now the variable change~$\epsilon_0 = \frac{K+r}{K} \epsilon_0'$,
$\epsilon_0 \leq 2$ and
$2e/56 < 1/10$, and with the same remark on the vanishing value of~$r
\log(eM/r)$ when~$r=0$.

Therefore, rewriting everything as a function of~$\epsilon'_0$ in~\eqref{eq:almost-consist-inv-prox} and
forgetting the prime symbol, we find that, as soon as  
\begin{equation}
  \label{eq:simp-cond-M-GRFCQ}
M \geq \tfrac {6(\delta\ +\ 1)}{\epsilon_0} N \log(\tfrac{29M}{\epsilon_0})\ +\
\tfrac {4(\delta\ +\ 1)}{\epsilon_0}\log(\tfrac{1}{2\eta}),
\end{equation}
for GRFCQ, or if 
\begin{equation}
  \label{eq:simp-cond-M-QCS}
  M \geq \tfrac {8(\delta\ +\ 1)}{\epsilon_0}  K \log(\tfrac{56 \max(N,M/10)}{\epsilon_0}) +
  \tfrac {4(\delta\ +\ 1)}{\epsilon_0} \log(\tfrac{1}{2\eta}),  
\end{equation}
for QCS, and given a draw of~$\bs \Phi \sim \cl N^{M \times N}(0,1)$ and~$\bs \xi \sim
\cl U^{M}([0,\delta])$, the event
\begin{equation}
  \label{eq:tmp1}
\forall \bs x,\bsxalt\in \cl K,\quad \| \cl Q_\delta(\bs \Phi \bs x + \bs
\xi)-\cl Q_\delta(\bs \Phi \bsxalt + \bs \xi)\|_1 \leq r\,\delta\quad
\Rightarrow\quad\|\bs x - \bsxalt\|\leq c_r\,\epsilon_0,  
\end{equation}
holds with probability higher than~$1 - \eta$, with~$\cl K = \bb B^N$
and~$c_r=\tfrac{N+r}{N}$ for GRFCQ and with~$\cl K = \Sigma_K \cap \bb
B^N$ and~$c_r = \tfrac{K+r}{K}$ for QCS.

In the case of GRFCQ, saturating the condition on~$M$ above we find
$$
\epsilon_0 = \tfrac {6(\delta\ +\ 1)}{M} N \log(\tfrac{29 M}{\epsilon_0})\ +\
\tfrac {4(\delta\ +\ 1)}{M}\log(\tfrac{1}{2\eta}).
$$
Using~$\epsilon_0 \leq 2$ from~$\cl K \subset \bb B^N$, this
saturation involves~$\epsilon_0 \geq 10/M$ so that 
$$
\epsilon_0 \leq \tfrac {6(\delta\ +\ 1)}{M} N \log(3 M^2)\ +\
\tfrac {4(\delta\ +\ 1)}{M}\log(\tfrac{1}{2\eta}).
$$

Therefore, from~\eqref{eq:tmp1} with~$c_r = (N+r)/N$, if~$\delta = O(1)$, there is a~$C>0$ such that
\begin{equation*}
\bb P\big[\cl E^r_{\delta}(\bs \Phi, \bs \xi, \bb B^N) \leq C \tfrac{N+r}{M} \big(  \log(M) +
\tinv{N}\log(\tfrac{1}{2\eta}) \big) \big]\ \geq 1 - \eta,  
\end{equation*}
which proves~\eqref{eq:worst-case-error-GRFCQ-almost-consist} in Theorem~\ref{thm:bound-distance-almost-consist}.

Finally, in the case of QCS, 
since for the~$M$ saturating~\eqref{eq:simp-cond-M-QCS} we have 
\begin{equation*}
M \geq \tfrac {8(\delta\ +\ 1)}{\epsilon_0} K \log(\tfrac{56
  \max(N,M/10)}{\epsilon_0})\
 \geq \tfrac {8}{\epsilon_0} K \log(56)
> \tinv{2}\,\tfrac{56}{\epsilon_0} K,  
\end{equation*}
we find 
\begin{align*}
\epsilon_0&= \tfrac {8(\delta + 1)}{M}  K \log(\tfrac{56 \max(N,M/10)}{\epsilon_0}) +
\tfrac {4(\delta + 1)}{M} \log(\tfrac{1}{2\eta})\\
&< \tfrac {8(\delta + 1)}{M}  K \log(\tfrac{2M \max(N,M/10)}{K}) +
\tfrac {4(\delta + 1)}{M} \log(\tfrac{1}{2\eta}).  
\end{align*}
Consequently, using~\eqref{eq:tmp1} with~$c_r = (K+r)/K$, if~$\delta = O(1)$, there exists a~$C>0$ such that
\begin{equation*}
\bb P\big[\cl E^r_{\delta}(\bs \Phi, \bs \xi, \Sigma_K \cap \bb B^N)
\
\leq C \tfrac {K+r}{M}  \big (\log(\tfrac{M \max(N,M)}{K}) +
\log(\tfrac{1}{2\eta}))\big]\ \geq 1 -\eta,  
\end{equation*}
which justifies~\eqref{eq:worst-case-error-sparse-case-almost-consist}
in Theorem~\ref{thm:bound-distance-almost-consist}.

\subsubsection{Proportional inconsistency}
\label{sec:prop-inconsist}

We now prove Theorem~\ref{thm:propor-consist} and consider that~$r$ is actually proportional to~$M$, \ie
there exists a constant~$0<\rho<1$ such that~$r\leq \rho M$
in~\eqref{eq:almost-consistency-cond}. In words, this could happen if the
number of inconsistent quantized projections between those of~$\bs x$
and~$\bsxalt$ represents a constant proportion of~$M$, \ie $r/M =
O(1)$. Coming back to~\eqref{eq:cond-M-GRFCQ} and
\eqref{eq:cond-m-any-consist-support} and assuming $r=\rho M$, we
easily get the equivalent conditions
\begin{equation*}
\ts M
\geq \frac{4\delta +
  2\epsilon_0}{(1-\rho(1+2\log(e/\rho)))\,\epsilon_0- 4\rho \delta
  \log(e/\rho)}\
\ts\big(N \log(\frac{29 \sqrt
  N}{\epsilon_0}) + \log(\frac{1}{2\eta})\big),  
\end{equation*}
 for GRFCQ, and 
 \begin{equation*}
\ts M \geq \tfrac {4\delta\ +\ 2\epsilon_0}{\epsilon_0(1 -
  \rho(1+2\log(e/\rho))) - 4\rho\delta \log(e/\rho)}\
\ts
\big( 2K \log(\tfrac{56 N}{\sqrt{K}\epsilon_0}) + \log(\tfrac{1}{2\eta})\big).     
 \end{equation*}
for QCS.

Therefore, assuming 
$$
\bar\rho := \rho\,(1+2\log(e/\rho)) < 1,
$$ 
which is satisfied if~$\rho < 1/10$, and defining~$\epsilon_0' =
(1-\bar\rho)\,\epsilon_0- 4\rho
\delta \log(e/\rho) \leq \epsilon_0$, \ie $\epsilon_0 = (1-\bar\rho)^{-1} (\epsilon_0' + 4\rho
\delta \log(e/\rho) )$, we find that if 
\begin{equation}
  \label{eq:cond-M-inconsistent-GRFCQ}
  \ts M \geq \tfrac{4\delta + 4}{\epsilon_0'}\big(N \log(\frac{29 \sqrt  N}{\epsilon'_0}) + \log(\frac{1}{2\eta})\big),
\end{equation}
for GRFCQ, or 
\begin{equation}
  \label{eq:cond-M-inconsistent}
  \ts M \geq \tfrac {4\delta\ +\ 2}{\epsilon'_0} \big( 2K \log(\tfrac{56 N}{\sqrt{K}\epsilon'_0}) + \log(\tfrac{1}{2\eta})\big),
\end{equation}
for QCS, we have, with probability at least~$1-\eta$, the event
\begin{equation}
  \label{eq:tmp1-prop-inconst}
\forall \bs x,\bsxalt\in \cl K, \| \cl Q_\delta(\bs \Phi \bs x + \bs
\xi)-\cl Q_\delta(\bs \Phi \bsxalt + \bs \xi)\|_1 \leq \rho\,\delta\,M\quad
\Rightarrow\quad\|\bs x - \bsxalt\|\leq C_\rho \epsilon'_0 + D_\rho \delta,
\end{equation}
with~$C_\rho := (1-\bar\rho)^{-1} \geq 1$ and~$D_\rho :=
(1-\bar\rho)^{-1} 4 \rho \log(e/\rho)$, and~$\cl K$ set
to~$\bb B^N$ for GRFCQ and to~$\Sigma_K \cap \bb
B^N$ for QCS.  

This last relation shows that there is a price to pay when the
inconsistency between the quantized projections of~$\bs x$ and
$\bsxalt$ reaches a level that is proportional to~$M$. While the first term~$C_\rho \epsilon_0$ can be made arbitrarily
low by increasing~$M$, the second term
$D_\rho\delta \geq 4\rho\delta$ is constant and fixed by~$\rho$ and~$\delta$. This part
vanishes only when~$\rho$ tends to 0 (see also Remark~\ref{rem:tight-bias}), while~$C_\rho$ approaches~1 in
this~case.

\section*{Acknowledgements}
\label{sec:acknowledgements}

The author thanks Alexander Powell (Vanderbilt U., USA)
and Jalal Fadili (GREYC, U. Caen, France) for the inspiring discussions
made during the ICCHA5 conference at Vanderbilt University (TN, USA),
and Valerio Cambareri (UCLouvain, Belgium) for his advice on the
writing of this paper. The author thanks also the anonymous reviewers
for their useful advices and remarks for improving the structure,
the presentation and the discussion of this paper. Laurent Jacques is a Research Associate funded by the Belgian
F.R.S.-FNRS. 

\appendix

\section{Proof of Lemma~\ref{lemma:main-proba-bound}}
\label{sec:proof-lemma}

This appendix is dedicated to the proof of
Lemma~\ref{lemma:main-proba-bound}. This one lies at the heart of all our
developments as it determines both Theorems~\ref{thm:consist-impose-closeness} and
\ref{thm:consist-impose-closeness-saprse-case} and their corollaries.
In short, given the dithered quantized mapping
\eqref{eq:qcs-unif-dith-model} and two non-overlapping balls centered
on two distinct vectors, this lemma
bounds the probability that there exist two consistent
vectors, one in each ball, and relates this bound the distance between
the ball centers and the ball width.  The reason why this lemma is
important is due to the fact it allows us a certain form of
continuity in the proximity analysis of consistent vectors. This point
is mandatory for
proving Theorems~\ref{thm:consist-impose-closeness} and
\ref{thm:consist-impose-closeness-saprse-case} by covering the
signal domain with balls of appropriate radius, hence allowing us to
use a union bound argument for studying the proximity of any
consistent vector pairs in this space.

Let us recall the context of this lemma. We want to show that, given two
points~$\tilde{\bs p}, \tilde{\bs
  q} \in \bb R^N$, there exists a radius~$s' \geq \tfrac{1}{8\sqrt N} \|\tilde{\bs p}-\tilde{\bs
  q}\|$ such that, for~$\bs \Phi \sim \cl N^{N \times
  M}(0,1)$ and~$\bs \xi \sim \cl U^{M\times 1}([0,\delta])$, the probability  
\begin{equation*}
P_{s'}(\alpha, M)\ :=
 \bb P\big[\exists \bs u \in \bb B_{s'}(\tilde{\bs p}), \exists \bs v \in
\bb B_{s'}(\tilde{\bs q}),\
\cl Q_\delta[\bs \Phi \bs u + \bs \xi] =\cl
Q_\delta[\bs \Phi \bs v + \bs \xi]\big]  
\end{equation*}
satisfies
$$  
P_{s'}(\alpha, M)\ \leq\ \big(1 - \tfrac{3\alpha}{8\ +\ 4\alpha}\big)^M,  
$$
with~$\alpha = {\|\tilde{\bs p}-\tilde{\bs q}\|}/{\delta}$.

Notice first that we can focus on upper bounding the probability associated to a single projection by the random vector
$\bs\varphi \sim \cl N^{N\times 1}(0,1)$ quantized with~$\cl Q_\delta$
with a scalar
dithering~$\xi \sim \cl U([0,\delta])$. The result for
$M$ dithered quantized projections will simply
follow by raising the single measurement bound to the power~$M$,
\ie $P_{s'}(\alpha, M) \leq (P_{s'}(\alpha, 1))^M$. 

We write~$\bs \varphi = \phi\,\hat{\bs \varphi}$, where
$\hat{\bs \varphi} \in \bb S^{N-1}$ is
uniformly distributed at random over~$\bb S^{N-1}$ and the length
$\phi=\|\bs\varphi\|\sim \chi(N)$
follows a~$\chi$ distribution with~$N$ degrees of freedom. We are
going first to estimate the following conditional probability:
\begin{align}
P_{s'}(\alpha, 1|\phi)\nonumber&:=\bb P\big[\exists \bs u \in \bb B_{s'}(\tilde{\bs p}), \exists \bs v \in
\bb B_{s'}(\tilde{\bs q}), \cl Q_\delta[\bs \varphi^T \bs u + \xi] =\cl Q_\delta[\bs \varphi^T \bs v + \xi]\,\big|\ \|\bs\varphi\|=\phi\big]\nonumber\\
&=\ \bb P\big[\exists \bs u \in \bb B_{s'}(\tilde{\bs p}), \exists \bs v \in
\bb B_{s'}(\tilde{\bs q}), \cl Q_\delta[\phi \hat{\bs \varphi}^T \bs u + \xi] =\cl Q_\delta[\phi\hat{\bs
\varphi}^T \bs v + \xi]\,\big|\ \|\bs \varphi\| = \phi\big]\nonumber\\
&=\ \bb P\big[\exists \bs u \in \bb B_{r}(\bs p), \exists \bs v \in
\bb B_{r}(\bs q), \cl Q_\delta[\hat{\bs \varphi}^T \bs u + \xi] =\cl Q_\delta[\hat{\bs
\varphi}^T \bs v + \xi]\,\big|\ \|\bs \varphi\| = \phi\big],\label{eq:tmpeq}
\end{align}
with the variable changes~$r=\phi s'$,~$\bs p = \phi \tilde{\bs p}$,~$\bs q = \phi \tilde{\bs
  q}$,~$\phi\hat{\bs \varphi}=\bs \varphi$. 
Notice that~$2r/\|\bs p - \bs q\| = 2s'/\|\tilde{\bs p} - \tilde{\bs
  q}\|$. Let us focus on this last probability, keeping in mind the
relationships between these parameters for estimating later a result
which is not conditioned to the knowledge of~$\phi$.

We follow the procedure described in
\cite{jacques2013quantized}. In this work, from a generalization of
the Buffon's needle problem~\cite{buffon_origin,hey2010georges} in~$N$
dimensions, it is shown that when
$r=0$, \ie when~$\bs u = \bs p$ and~$\bs v = \bs q$, computing
$P_{s'}(\alpha, 1|\phi)$ above is equivalent to estimating the probability that a segment (or
\emph{needle}) of
length~$L=\|\bs p - \bs q\|$ uniformly ``thrown'' at
random in~$\bb
R^N$, both spatially and in orientation, does not intersect a
fixed set 
of parallel~$(N-1)$-dimensional hyperplanes spaced by a distance
$\delta$.  

More precisely, given~$\hat{\bs \varphi} \in \bb S^{N-1}$ and~$\xi \in
[0, \delta]$, the function~$f(\bs
v) := \cl
Q_\delta(\hat{\bs \varphi}^T\bs v + \xi)$ is piecewise
constant in~$\bb R^N$ and the frontiers where its value changes correspond
to a set of parallel~$(N-1)$-dimensional hyperplanes in~$\bb R^N$. 
These hyperplanes are equi-spaced with a separating distance
$\delta$ and they are all normal to the direction~$\hat{\bs
  \varphi}$. Consequently, the quantity~$X := \inv{\delta}\big(\cl
Q_\delta(\hat{\bs \varphi}^T\bs p + \xi) - \cl
Q_\delta(\hat{\bs \varphi}^T\bs q + \xi)\big) \in \bb Z$ counts
the number of such hyperplanes intersecting the segment~$\overline{\bs
  p\bs q}$.  In this scenario, this segment is thus fixed and the
hyperplanes are randomly oriented and shifted by~$\hat{\bs \varphi}$
and~$\xi$, respectively. 

However, we can reverse the point of view and
rather consider those hyperplanes as fixed and normal, \eg to the first 
canonical axis~$\bs e_1$ of~$\bb R^N$. This is allowed by considering
the affine mapping~$\cl A_{\hat{\bs \varphi},\xi}: \bb R^N \to \bb
R^N$ implicitly defined by any combination
  of a rotation and of a translation in~$\bb R^N$ such that~$\bs e_1^T\cl A_{\hat{\bs \varphi},\xi}(\bs v) = \hat{\bs \varphi}^T \bs v + \xi$ for all
 ~$\bs v \in \bb R^N$. In words, thanks to~$\cl A_{\hat{\bs
      \varphi},\xi}$, projecting a point~$\bs v \in \bb R^N$ onto the random orientation~$\hat{\bs \varphi}$
  and shifting the result by~$\xi$ is equivalent to projecting the random point~$\cl
  A_{\hat{\bs \varphi},\xi}(\bs v)$ onto~$\bs e_1$. 

Therefore, denoting~$\bs p' = A_{\hat{\bs \varphi},\xi}(\bs p)$ and
$\bs q' = A_{\hat{\bs \varphi},\xi}(\bs q)$, it is easy to see that
the~$L$-length segment~$\overline{\bs
  p'\bs q'}$, \ie our \emph{needle}, is then oriented uniformly at random over~$\bb S^{N-1}$
while the distance of
its centrum~$\tinv{2}(\bs p' + \bs q')$ to the closest hyperplane follows a uniform random variable
over the interval~$[0, \delta/2]$. Moreover, we have
$$
X = \tinv{\delta}\big(\cl
Q_\delta(\bs e_1^T\bs p') - \cl
Q_\delta(\bs e_1^T\bs q')\big),
$$
so that~$X$ actually measures the number of intersections the segment~$\overline{\bs
  p'\bs q'}$ makes with the set of hyperplanes~$\cl G_\delta =
\bigcup_{k \in \bb Z} \{\bs x: \bs e_1^T \bs x = k\}$. In
\cite{jacques2013quantized}, the distribution of the discrete bounded random
variable~$X$ is actually fully
determined and denoted~${\rm Buffon}(L/\delta, N)$.

\begin{figure}
  \centering
  \includegraphics[width=5cm]{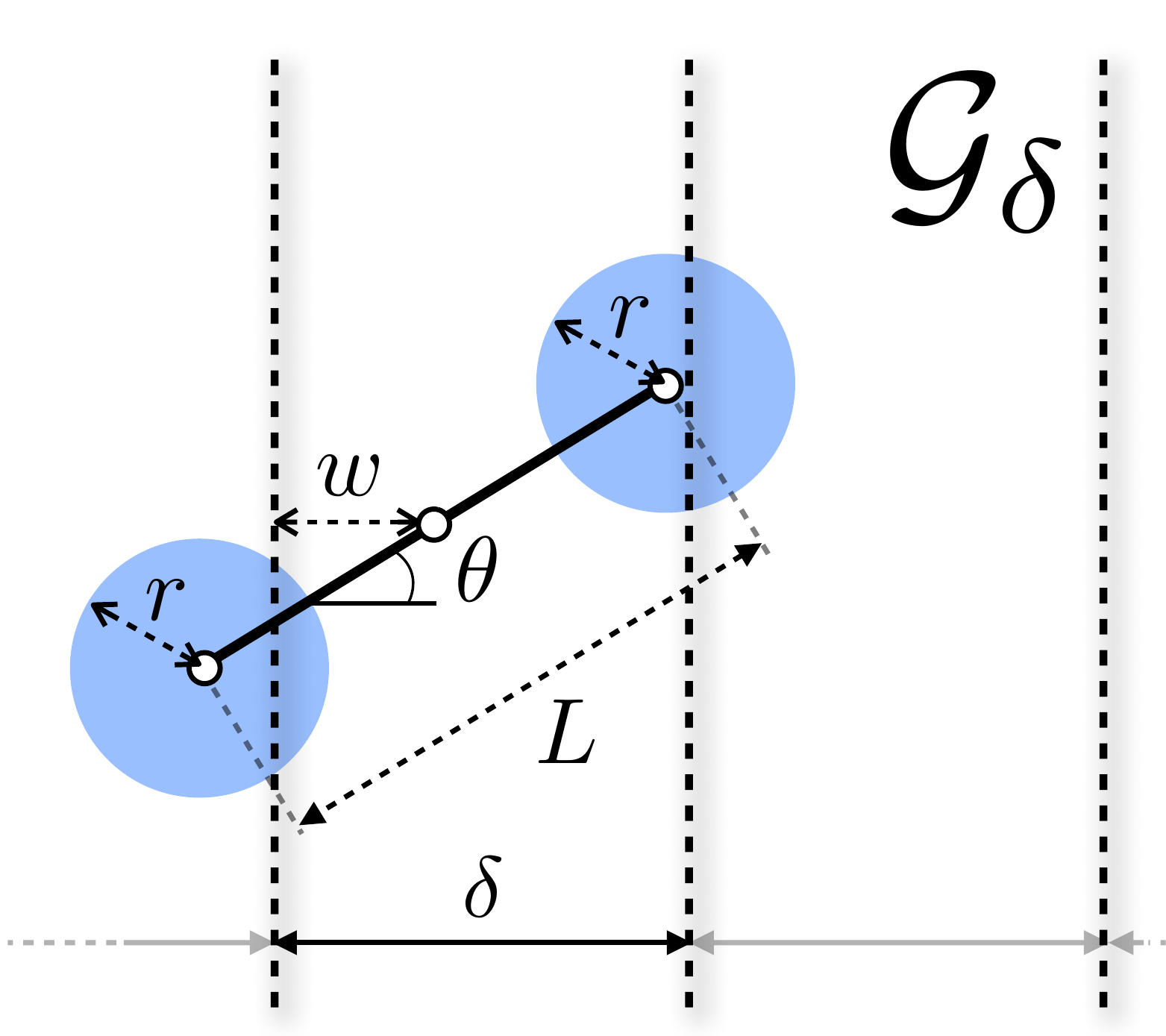}
  \caption{A Buffon ``dumbbell'' problem in 2-D.}
  \label{fig:dumbbell-problem}
\end{figure}

For~$r > 0$, Eq.\eqref{eq:tmpeq} shows that we must now consider the
two neighboring~$\ell_2$-balls of~$\bs p$ and~$\bs q$
in~$P_{s'}(\alpha, 1|\phi)$ and estimate the probability that at least
two points of these balls share the same dithered quantized projection onto
$\hat{\bs \varphi}$. Following the same argument as above, this new
problem is now equivalent to a new
Buffon experiment if the
previous needle is ended with two balls. In other words, we create a
\emph{dumbbell} shape formed by a segment of length~$L$
on the extremities of which two balls of radius~$r$ are centered  (see
Fig.~\ref{fig:dumbbell-problem}). 

It is then easy to see that~$P_{s'}(\alpha, 1|\phi)$ is equivalent
to the probability that there is no hyperplane of~$\cl G_\delta$ intersecting only the
part of the segment outside of the two balls when the dumbbell is thrown
randomly in~$\bb R^N$ as for previous Buffon's needle. Otherwise, having such an
intersection would mean that no pair of points
(taken in distinct balls) lie in the same subvolume delimited by
two consecutive hyperplanes, \ie they do not have the same quantized
projection, and conversely.

Let us parametrize this dumbbell by its distance~$w \sim \cl U([0, \delta/2])$
(estimated from the middle of the segment) to the closest hyperplane~$\cl
G_\delta$ and by its orientation drawn uniformly at random in~$\bb
S^{N-1}$. By symmetry, only the
angle~$\theta \in [0, \pi]$ made by the dumbbell with the normal
vector~$\bs e_1$ to~$\cl G_\delta$ is important
in this parametrization~\cite{jacques2013quantized}. Moreover, from
Fig.~\ref{fig:dumbbell-problem},  the
absence of intersection amounts to imposing 
$w \geq \inv{2}\,L|\cos\theta| - r$. The probability~$P_{s'}(\alpha,
1|\phi)$ is thus obtained by
\begin{align*}
\ts P_{s'}(\alpha, 1|\phi)&\ts= \int_0^{\pi} \kappa_N\,(\sin\theta)^{N-2}\,\ud\theta \int_0^{\delta/2} \bb I(w \geq \tfrac{L}{2}|\cos\theta| - r)
\tfrac{2}{\delta}\,\ud w,\\
&\ts= \tfrac{4\kappa_N}{\delta} \int_0^{\pi/2} (\sin\theta)^{N-2}
\ud\theta \int_0^{\delta/2} \bb I(w \geq \tfrac{L}{2}\cos\theta - r)
\,\ud w,
\end{align*}
where 
$\kappa_N (\sin\theta)^{N-2} \ud\theta$ is the area (normalized to
the one of~$\bb S^{N-1}$) of the thin spherical segment~$\cl S_{\ud \theta}(\theta):=\{\hat{\bs
  v} \in \bb S^{N-1}: \arccos(\bs e_1^T \hat{\bs
  v}) \in [\theta, \theta + \ud\theta]\}$, where~$\kappa_N :=
\frac{\Gamma(\frac{N}{2})}{\sqrt\pi\,\Gamma(\frac{N-1}{2})} =
B(\inv{2},\frac{N-1}{2})^{-1}$ and
$B(k,l)=\Gamma(k)\Gamma(l)/\Gamma(k+l)$ is the Beta function.

It is important to remark that, from~\cite{jacques2013quantized,qi2012bounds},
\begin{equation}
  \label{eq:bound-tau_N}
  \ts\ \tfrac{\sqrt 2}{\sqrt \pi} \, (N+1)^{-\frac{1}{2}}
  \ \leq\ \tfrac{2\kappa_N}{N-1}\ \leq\ \tfrac{\sqrt 2}{\sqrt \pi} \,  (N-1)^{-\frac{1}{2}},
\end{equation}
so that, for~$N \geq 2$,
\begin{equation*}
\tfrac{1}{\sqrt{2\pi}} \, (N+1)^{\frac{1}{2}} - 1\ <\ \kappa_N\ \leq\ \tfrac{1}{\sqrt{2\pi}} \,
  (N-1)^{\frac{1}{2}}\quad
\Rightarrow\ \kappa_N = \Theta(\sqrt{\tfrac{N}{2\pi}}).  
\end{equation*}

Let us define two angles~$0 \leq \theta_0 \leq \theta_1 \leq
\pi/2$ such that~$\cos\theta_0 =
\min(\tfrac{\delta + 2r}{L},1)$ and~$\cos\theta_1 = \tfrac{2r}{L}$,
assuming~$2r \leq L$ (otherwise,~$P_{s'} = 1$). The angular
integration domain can be split in three
intervals:~$[0,\theta_0]$,~$[\theta_0,\theta_1]$ and
$[\theta_1,\pi/2]$.  Over the first interval, the integral is always zero since, either we have a zero
measure interval ($\theta_0=0$) or~$\bb I(w \geq
\tfrac{L}{2}\cos\theta - r)=0$ since~$\tfrac{L}{2}\cos\theta\geq
\tfrac{L}{2}\cos\theta_0 = \delta/2 + r$ and~$0 \leq w \leq
\delta/2$. Moreover, over the last interval~$[\theta_1,\pi/2]$,~$I(w \geq
\tfrac{L}{2}\cos\theta - r) = 1$

Therefore, writing~$a=L/\delta$,
\begin{align*}
\ts P_{s'}(\alpha, 1|\phi)&\ts=\ \tfrac{4\kappa_N}{\delta} \int_{\theta_0}^{\theta_1} (\sin\theta)^{N-2}
(\tfrac{\delta}{2} -
\tfrac{L}{2}\cos\theta + r)\,\ud\theta\ +\
2\kappa_N \int_{\theta_1}^{\pi/2} (\sin\theta)^{N-2}\,\ud\theta\\ 
&\ts=1 + \tfrac{4\kappa_N}{\delta} \int_{\theta_0}^{\theta_1} (\sin\theta)^{N-2}
(\tfrac{\delta}{2} -
\tfrac{L}{2}\cos\theta + r)\,\ud\theta\ -\ 
2\kappa_N \int_{0}^{\theta_1} (\sin\theta)^{N-2}\,\ud\theta\\
&\ts=1 - \tfrac{4\kappa_N}{\delta} \int_{\theta_0}^{\theta_1} (\sin\theta)^{N-2}
(\tfrac{L}{2}\cos\theta - r)\,\ud\theta\ - 
\tfrac{4\kappa_N}{\delta} \int_{0}^{\theta_0} (\sin\theta)^{N-2}\,\tfrac{\delta}{2}
\,\ud\theta\\
&\ts=1\ -\ \tfrac{4\kappa_N}{\delta} \int_{0}^{\theta_1} (\sin\theta)^{N-2}
(\tfrac{L}{2}\cos\theta - r)\,\ud\theta\ + \ \tfrac{4\kappa_N}{\delta} \int_{0}^{\theta_0} (\sin\theta)^{N-2}
(\tfrac{L}{2}\cos\theta - (r + \tfrac{\delta}{2}))\,\ud\theta\\
&\ts=1\ -\ 2\kappa_N a\,\int_0^1
(1-v^2)^{\frac{N-3}{2}} \big[ (v - \tfrac{2r}{L})_+ - (v - \tfrac{2r+\delta}{L})_+\big]\ud v,
\end{align*}
applying a variable change~$v=\cos\theta$ on the last line.

Let us study this last integral and the function~$f(v) = (v -
\tfrac{2r}{L})_+ - (v - \tfrac{2r+\delta}{L})_+$. We can verify that
$F(v) :=  \int_0^{v} f(v') \ud v'$ is convex and reads
\begin{equation*}
2 F(v)\ =\ (v - \tfrac{2r}{L})^2_+ - (v -
\tfrac{2r+\delta}{L})^2_+\ =\
\begin{cases}0,&\text{if } v\leq \tfrac{2r}{L},\\
(v - \tfrac{2r}{L})^2,&\text{if } \tfrac{2r}{L} < v\leq \tfrac{2r+\delta}{L},\\
\tfrac{\delta}{L}(2v - \tfrac{4r+\delta}{L}),&\text{if}\ v > \tfrac{2r+\delta}{L}.
\end{cases}  
\end{equation*}
Moreover, by integrating by part,
\begin{equation*}
\ts \int_0^1
(1-v^2)^{\frac{N-3}{2}} f(v)\,\ud v\
\ts = \int_0^1
(N-3)\,v\,(1-v^2)^{\frac{N-5}{2}} F(v)\,\ud v,  
\end{equation*}
The positive measure~$\mu(v) =
(N-3)\,v\,(1-v^2)^{\frac{N-5}{2}}$ has unit mass over~$[0,1]$ so that,
by convexity of~$F$ and using Jensen's inequality, 
$$
\int_0^1
F(v)\,\mu(v)\,\ud v\ \geq\ F\big(\ts\int_0^1
v\,\mu(v)\,\ud v\big).
$$
However, since
\begin{equation}
  \label{eq:other-def-of-dN}
  \ts (N-3) \int_{0}^{1} (1 -
v^2)^{\frac{N-5}{2}}\,v^{q}\,\ud v =\
 \ts
\tfrac{N-3}{2}\,B(\tfrac{q+1}{2},\tfrac{N-3}{2}) =
\tfrac{\Gamma\big(\tfrac{q+1}{2}\big)\,\Gamma(\frac{N-1}{2})}{\Gamma\big(\tfrac{N+q-2}{2}\big)},
\end{equation}
we find
\begin{equation}
  \label{eq:other-def-of-dN}
  \ts \ts\int_0^1
v\,\mu(v)\,\ud v  = (N-3) \int_{0}^{1} (1 -
v^2)^{\frac{N-5}{2}}\,v^{2}\,\ud v\ =
\tfrac{\sqrt{\pi}\,\Gamma(\frac{N-1}{2})}{2\,\Gamma(\frac{N}{2})} =
\tfrac{1}{2\kappa_N},
\end{equation}
and 
$$
P_{s'}(\alpha, 1|\phi)\ \leq\ 1 - 2\kappa_N a\,F(\tinv{2\kappa_N}).
$$

From the definition of~$F$ above, if~$\tinv{2\kappa_N} \leq \frac{2r}{L}$,
$F=0$ and we cannot show anything. Let us thus set~$2r = \tfrac{\lambda}{2\kappa_N} L$, where
$\lambda\in (0,1)$ will be determined later. Notice that, since
$s'=\phi\,r$ and~$\|\tilde{\bs p}-\tilde{\bs q}\|=\phi\,L$, we
implicitly impose
$2r/L = 2s'/\|\tilde{\bs p}-\tilde{\bs
  q}\|=\tfrac{\lambda}{2\kappa_N}$. 

Then, 
$$
2F(\tinv{2\kappa_N}) = \begin{cases}\tinv{4\kappa^2_N}(1 - \lambda)^2,&\text{if } 
a \leq \tfrac{2\kappa_N}{1-\lambda},\\
\tinv{a}(\tfrac{1}{\kappa_N}(1-\lambda) - \tinv{a}),&\text{if } a \geq \tfrac{2\kappa_N}{1-\lambda},
\end{cases}
$$
so that, writing~$\phi_0 = \tfrac{2\kappa_N}{1-\lambda}$, we have
$$
P_{s'}(\alpha, 1|\phi)\ \leq\ \begin{cases}1 - \tfrac{a}{2 \phi_0} (1-\lambda),&\text{if } 
a \leq \phi_0,\\
\lambda + \tfrac{\phi_0}{2a}(1-\lambda),&\text{if } a \geq \phi_0.
\end{cases}
$$

Let us recall that~$P_{s'}(\alpha, 1|\phi)$ is defined conditionally to
$\phi=\|\bs\varphi\|$ 
with~$\phi \sim \chi(N)$. Moreover,~$a = \|\bs p-\bs q\|/\delta = \phi \|\tilde{\bs p}-\tilde{\bs q}\|/\delta = \alpha \phi$ with~$\alpha =
\|\tilde{\bs p}-\tilde{\bs q}\|/\delta$. Denoting the pdf of~$\chi(N)$ by~$\gamma_N(\phi) = c_N \phi^{N-1} \exp(-\tfrac{\phi^2}{2})$ and~$c_N =
2^{1-\frac{N}{2}} / \Gamma(\frac{N}{2})$, we can develop~$P_{s'}(\alpha, 1)=\
\int_0^{+\infty} P_{s'}(\alpha, 1|\phi)\,\gamma_N(\phi)\ \ud \phi$ as follows
\begin{align*}
P_{s'}(\alpha, 1)&\ts \leq\
\int_0^{\phi_0/\alpha} (1 - \tfrac{\alpha \phi}{2 \phi_0} (1-\lambda))\, \gamma_N(\phi)\,\ud
\phi\\
&\ts\quad +\  \int_{\phi_0/\alpha}^{+\infty} (\lambda + \tfrac{\phi_0}{2 \alpha \phi}(1-\lambda))\,\gamma_N(\phi)\,\ud \phi\\
&\ts = \lambda\ +\
(1-\lambda)\int_0^{\phi_0/\alpha} (1 - \tfrac{\alpha \phi}{2 \phi_0})\, \gamma_N(\phi)\,\ud
\phi\\
&\ts\qquad +\  (1-\lambda)\,\int_{\phi_0/\alpha}^{+\infty} \tfrac{\phi_0}{2
  \alpha \phi}\gamma_N(\phi)\,\ud \phi\\
&\ts = \lambda\ +\
(1-\lambda)\int_0^{+\infty} \varphi(\tfrac{\alpha \phi}{\phi_0})\,\gamma_N(\phi) \ud \phi,
\end{align*}
with~$\varphi(t) = 1-\tinv{2}t$ if~$0\leq t < 1$ and~$\varphi(t) =
\tinv{2t}$ if~$t\geq 1$. 

We can notice that~$t\varphi(t)$, which is equal to~$\tinv{2}t(2-t)$
over~$[0,1]$ and to~$\tinv{2}$ for~$t\geq 1$, is a concave function.
Therefore, by Jensen inequality, 
\begin{align*}
&\ts\int_0^{+\infty} \varphi(\tfrac{\alpha \phi}{\phi_0})\,\gamma_N(\phi)
\ud \phi\\
&\ts=\ \tfrac{c_N}{c_{N-1}}\,\int_0^{+\infty} \phi\,\varphi(\tfrac{\alpha
  \phi}{\phi_0})\,\gamma_{N-1}(\phi) \ud \phi\\
&\ts\leq\ \tfrac{c_N}{c_{N-1}} (\bb E_{\gamma_{N-1}} \phi)\, \varphi(\tfrac{\alpha
  \bb E_{\gamma_{N-1}}\phi}{\phi_0}).
\end{align*}
We have also~$c_{N}/c_{N-1} =
 \Gamma(\frac{N-1}{2})/(\sqrt 2\Gamma(\frac{N}{2}))$ and~$\bb
 E_{\gamma_{N-1}} \phi = \sqrt 2
\Gamma(\tfrac{N}{2})/\Gamma(\tfrac{N-1}{2}) = \sqrt{2\pi} \kappa_N$, so that
$\tfrac{c_N}{c_{N-1}} (\bb E_{\gamma_{N-1}} \phi) = 1$ and 
$$
\tfrac{\alpha
  \bb E_{\gamma_{N-1}}\phi}{\phi_0} = \alpha\,\tfrac {1-\lambda}{2\kappa_N}\, 
  \bb E_{\gamma_{N-1}}\phi = \sqrt{\tfrac {\pi}{2}} (1-\lambda) \alpha.
$$
Consequently, since~$\varphi(t) \leq \tfrac{2}{2+t}$,
\begin{align*}
P_{s'}(\alpha, 1)&\ts\leq \lambda + (1-\lambda) \varphi\big(\sqrt{\tfrac
  {\pi}{2}} (1-\lambda) \alpha\big)\\
&\ts\leq \lambda + (1-\lambda) \tfrac{2}{2 + \sqrt{\frac
  {\pi}{2}} (1-\lambda) \alpha}\\
&\ts = 1 - \tfrac{\sqrt{\frac
  {\pi}{2}} (1-\lambda)^2 \alpha}{2 + \sqrt{\frac
  {\pi}{2}} (1-\lambda) \alpha} =\ 1 - \tfrac{\sqrt{\frac{2}{\pi}} \alpha}{2 + \alpha}\\
&\ts <\ 1 - \tfrac{3 \alpha}{8 + 4\alpha},
\end{align*}
taking~$(1-\lambda) = \sqrt{\tfrac{2}{\pi}}> 3/4$.

Moreover, from the bounds on~$\kappa_N$ given in~\eqref{eq:bound-tau_N}, this shows
also that~$$
\tfrac{2s'}{\|\tilde{\bs p}-\tilde{\bs
  q}\|} = \tfrac{2r}{L} = \tfrac{\lambda}{2\kappa_N} \geq (1-\sqrt{\tfrac{2}{\pi}})
\sqrt{\tfrac{\pi}{2}}\tinv{(N-1)^{1/2}} > \tfrac{1}{4\sqrt N},
$$
as stated at the beginning of Lemma~\ref{lemma:main-proba-bound}.

\end{document}